\documentclass[runningheads]{llncs}
\usepackage{comment}
\usepackage[utf8]{inputenc}
\usepackage{amsmath}

\usepackage{amsthm}
\usepackage{amssymb}
\usepackage{amsfonts}
\usepackage{color,soul}
\usepackage{color,colortbl}
\usepackage{footmisc}
\usepackage{hyperref}
\usepackage{enumitem}

\hypersetup{
    colorlinks=false,
    linkcolor=red,
    filecolor=red,      
    urlcolor=red,
    citecolor=blue,
}

\usepackage[table,dvipsnames]{xcolor}
\usepackage{graphicx}
\usepackage{longtable}
\usepackage{subcaption}
% https://tex.stackexchange.com/questions/31906/subcaption-package-compatibility-issue
% https://tex.stackexchange.com/questions/129291/change-caption-style-of-listings-package-without-caption-package/129594#129594
\usepackage[compatibility=false,labelfont=bf,labelsep=period]{caption}
\captionsetup[table]{skip=10pt}
% Define TODO commands

\newcommand{\incomplete}[0]{\hl{(INCOMPLETE)}}
\newcommand{\verify}[0]{\hl{(VERIFY)}}

% Define a colour for tables 
\definecolor{gray}{gray}{0.9}

\newcommand*\justify{%
  \fontdimen2\font=0.4em% interword space
  \fontdimen3\font=0.2em% interword stretch
  \fontdimen4\font=0.1em% interword shrink
  \fontdimen7\font=0.1em% extra space
  \hyphenchar\font=`\-% allowing hyphenation
}

\newcommand{\set}[1]{\{#1\}}
\newcommand{\setsize}[1]{\lvert#1\rvert}

\usepackage{stmaryrd}

\mathchardef\mathhyphen="2D
\newcommand{\partition}{\mathcal{P}}
\newcommand{\intercomm}{{\mathit{IC}}}

\newif\iflong
\longtrue
\newcommand{\inLongVersion}[1]{\iflong #1\fi}
\newcommand{\inShortVersion}[1]{\iflong  \else #1\fi}

\begin{document}

\title{On the Hierarchical Community Structure of Practical Boolean Formulas}
\titlerunning{On the HCS of Practical Boolean Formulas}

%\title{Understanding the Power of CDCL SAT Solvers via Hierarchical Community Structure 
%\\ \small {\color{blue} NF: How about "Hierarchical Community Structure and SAT Solving", or "On the Hierarchical Community Structure of Practical SAT Instances"} 
%\\ \small {\color{blue} MV: "On the Hierarchical Community Structure of Practical SAT Formulas" would be a nice parallel to the paper of Giráldez-Cru and Levy.} }
%\titlerunning{Understanding the Power of CDCL SAT Solvers via HCS}

\author{
Chunxiao Li\inst{1\thanks{Joint first author}\thanks{Work done in part while the authors were at the 2021 Satisfiability: Theory, Practice, and Beyond program at the Simons Institute, Berkeley, CA, USA.}}
% \orcidID{0000-0001-7336-1614}
\and Jonathan Chung\inst{1\footnotemark[1]}
% \orcidID{0000-0001-5378-1136}
\and Soham Mukherjee\inst{1, 2}
% \orcidID{0000-0003-4313-962X}
\and Marc Vinyals\inst{3}
% \orcidID{0000-0002-1487-445X}
\and \\ Noah Fleming\inst{4}
% \orcidID{}
\and Antonina Kolokolova\inst{5}
% \orcidID{}
\and Alice Mu\inst{1}
% \orcidID{}
\and Vijay Ganesh\inst{1}
% \orcidID{0000-0002-6029-2047}
}
\authorrunning{Li, Chung et al.}

\institute{
University of Waterloo, Waterloo, Canada
\and Perimeter Institute for Theoretical Physics, Waterloo, Canada
\and Technion, Haifa, Israel
\and University of Toronto, Toronto, Canada
\and Memorial University of Newfoundland, St. John's, Canada
}

\maketitle
\setcounter{footnote}{0} 

% Note: the page limit is 15 pages excluding references: https://iiia.csic.es/sat2021/cfp/

\vspace{-0.45cm}
\begin{abstract}
Modern CDCL SAT solvers easily solve industrial instances containing tens of millions of variables and clauses, despite the theoretical intractability of the SAT problem. This gap between practice and theory is a central problem in solver research. It is believed that SAT solvers exploit structure inherent in industrial instances, and hence there have been numerous attempts over the last 25 years at characterizing this structure via parameters. These can be classified as \textit{rigorous}, i.e., they serve as a basis for complexity-theoretic upper bounds (e.g., backdoors), or \textit{correlative}, i.e., they correlate well with solver run time and are observed in industrial instances (e.g., community structure). Unfortunately, no parameter proposed to date has been shown to be both strongly correlative and rigorous over a large fraction of industrial instances.

Given the sheer difficulty of the problem, we aim for an intermediate goal of proposing a set of parameters that is strongly correlative and has good theoretical properties. Specifically, we propose parameters based on a graph partitioning called Hierarchical Community Structure (HCS), which captures the recursive community structure of a graph of a Boolean formula. We show that HCS parameters are strongly correlative with solver run time using an Empirical Hardness Model, and further build a classifier based on HCS parameters that distinguishes between easy industrial and hard random/crafted instances with very high accuracy. We further strengthen our hypotheses via scaling studies. On the theoretical side, we show that counterexamples which plagued flat community structure do not apply to HCS, and that there is a subset of HCS parameters such that restricting them limits the size of embeddable expanders. 
%\mvcomment{double-check; also too technical}
\end{abstract}
\section{Introduction}
\label{sec:introduction}
Over the last two decades, Conflict-Driven Clause-Learning (CDCL) SAT solvers have had a dramatic impact on many sub-fields of software engineering~\cite{cadar2008exe}, formal methods~\cite{clarke2018model}, security~\cite{dolby2007security,xie2005security}, and AI~\cite{blum1997fast}, thanks to their ability to solve large real-world instances with tens of millions of variables and clauses~\cite{SATcomp}, notwithstanding the fact that the Boolean satisfiability (SAT) problem is known to be NP-complete and is believed to be intractable~\cite{cook1971complexity}. 
%AK: rephrased slightly 
%This apparent contradiction can be explained away
A plausible explanation of this apparent contradiction would be that NP-completeness
%by observing that the NP-completeness 
of the SAT problem is established in a worst-case setting, while the dramatic efficiency of modern SAT solvers is witnessed over ``practical'' instances. However, despite over two decades of effort, we still do not have an appropriate mathematical characterization of practical instances (or a suitable subset thereof) and attendant complexity-theoretic upper and lower bounds. %Hence, t
This gap between theory and practice is rightly considered one of the central problems in solver research by theorists and practitioners alike.

% describe what is meant by structure. define correlative and rigorous parameters informally

The fundamental premise in this line of work is that SAT solvers are able to find short proofs (if such proofs exist) in polynomial time (i.e., they are efficient) for industrial instances and that they are able to do so because they somehow exploit the underlying properties (a.k.a. structure) of such industrial Boolean formulas\footnote{The term industrial is loosely defined to encompass instances obtained from hardware and software testing, analysis, and verification applications.}, and, further, that hard randomly-generated or crafted instances are difficult because they do not possess such structure.
% The fundamental premise in this line of work is that SAT solvers are efficient because they somehow exploit the underlying structure in industrial Boolean formulas\footnote{The term industrial is loosely defined to encompass instances obtained from hardware and software testing, analysis, and verification applications.}, and further, that hard randomly-generated or crafted instances are difficult because they do not possess such structure.
Consequently, considerable work has been done in characterizing the structure of industrial instances via parameters. The parameters discussed in literature so far can be broadly classified into two categories: correlative and rigorous\footnote{Using terminology by Stefan Szeider~\cite{szeider21talk}.}. The term \emph{correlative} refers to parameters that take a specific range of values in industrial instances (as opposed to random/crafted) and further have been shown to correlate well with solver run time. This suggests that the structure captured by such parameters might explain why solvers are efficient. An example of such a parameter is modularity (more generally community structure~\cite{ansotegui2012community}). By contrast, the term \emph{rigorous} refers to parameters that characterize classes of formulas that are fixed-parameter tractable (FPT), such as backdoors~\cite{williams2003backdoors,zulkoski2018learning}, backbones~\cite{monasson1999determining}, treewidth, and branchwidth~\cite{alekhnovich2011branchwidth,samer2009fixed}, among many others~\cite{samer2009fixed}, or have been used to prove complexity-theoretic bounds over randomly-generated classes of formulas such as clause-variable ratio (a.k.a., density)~\cite{coarfa2003,selman1996generating}.

% describe the problem with the previously proposed parameters
The eventual goal in this context is to discover a parameter or set of parameters that is both strongly correlative and rigorous, such that it can then be used to establish parameterized complexity-theoretic bounds on an appropriate mathematical abstraction of CDCL SAT solvers, thus finally settling this decades-long open question. Unfortunately, the problem with all the previously proposed rigorous parameters is that either ``good'' ranges of values for these parameters are not witnessed in industrial instances (e.g., such instances can have both large and small backdoors) or they do not correlate well with solver run time (e.g., many industrial instances have large treewidth and yet are easy to solve, and treewidth alone does not correlate well with solving time~\cite{mateescu2011treewidth}).

Consequently, many attempts have been made at discovering correlative parameters that could form the basis of rigorous analysis~\cite{ansotegui2012community,giraldez2015modularity}. Unfortunately, all such correlative parameters either seem to be difficult to work with theoretically (e.g., fractal dimension~\cite{ansotegui2014fractal}) or have obvious counterexamples, i.e., it is easy to show the existence of formulas that simultaneously have ``good'' parameter values and are provably hard-to-solve. For example, it was shown that industrial instances have high modularity, i.e., supposedly good community structure~\cite{ansotegui2012community}, and that there is good-to-strong correlation between modularity and solver run time~\cite{newsham2014community}. However, Mull et al.~\cite{mull-sat16} later exhibited a
%dense
family of formulas that have high modularity and require exponential-sized proofs to refute. Finally, this line of research suffers from important methodological issues, that is, experimental methods and evidence provided for correlative parameters tend not to be consistent across different papers in the literature.

\noindent{\bf Hierarchical Community Structure of Boolean Formulas:} Given the sheer difficulty of the problem, we aim for an intermediate goal of proposing a set of parameters that is strongly correlative and has good theoretical properties. Specifically, we propose a set of parameters based on a graph-theoretic structure called Hierarchical Community Structure (HCS), inspired by a commonly-studied concept in the context of hierarchical networks~\cite{clauset2008hierarchy,ravasz2002hierarchical}, which satisfies all the empirical tests hinted above and has better theoretical properties than previously proposed correlative parameters. The intuition behind HCS is that it neatly captures the structure present in human-developed systems which tend to be modular and hierarchical~\cite{simon1962architecture}, and we expect this structure to be inherited by Boolean formulas modelling these systems.

\vspace{0.05cm}
\noindent{\textbf{Contributions\footnote{Instance generator and data can be found at \url{https://satsolvercomplexity.github.io/hcs}. \inShortVersion{Also, for the full-length paper and appendices (with proofs of theorems in Section~\ref{sec:theoretical-results}), please refer to the arXiv version of the paper~\cite{li2021hierarchical}.}}:} 
}
\vspace{-0.2cm}
\begin{enumerate}%[leftmargin=*]
\item \textbf{Empirical Result 1 (HCS and Industrial Instances):} We show that a set of parameters based on the HCS of the variable-incidence graph (VIG) of Boolean formulas are effective in distinguishing industrial instances from random/crafted ones. Moreover, we build a classifier that robustly classifies SAT instances into the categories they belong to (verification, random, etc.).
  The classification accuracy is approximately 99\% and we perform a variety of tests to ensure there is no overfitting (See Section~\ref{sec:empirical-result-category}).
    
    \item \textbf{Empirical Result 2 (Correlation between HCS and Solver Run Time):} We build an empirical hardness model based on our HCS parameters to predict the solver run time for a given problem instance. Our model, based on regression, performs well, achieving an $R^2$ score of $0.83$, much stronger than previous such results (See Section~\ref{sec:empirical-result-hardness})
    
    \item \textbf{Empirical Result 3 (Scaling Experiments of HCS Instances):} We empirically show, via scaling experiments, that HCS parameters such as community degree and leaf-community size positively correlate with solving time. We empirically demonstrate that formulas whose HCS decompositions fall in a good range of parameter values are easier to solve than instances with a bad range of HCS parameter values (See Section~\ref{sec:scaling-experiments-HCS-generator}).
    
    \item \textbf{Theoretical Results:} We theoretically justify our choice of HCS by showing that it behaves better than other parameters. More concretely, we show the advantages of hierarchical over flat community structure by identifying HCS parameters which let us avoid hard formulas that can be used as counterexamples to community structure~\cite{mull-sat16}, 
    and by showing graphs where HCS can find the proper communities where flat modularity cannot. We also show that there is a subset of HCS parameters (leaf-community size, community degree, and fraction of inter-community edges) such that restricting them limits the size of embeddable expanders (See Section~\ref{sec:theoretical-results}).
    
    \item \textbf{Instance Generator:} Finally, we provide an HCS-based instance generator which takes input values of our proposed parameters and outputs a formula that satisfies those values. This generator can be used to generate ``easy" and ``hard" formulas with different hierarchical structures (See Section~\ref{sec:scaling-experiments-HCS-generator}).
\end{enumerate}

\noindent{\bf Research Methodology:} We also codify a set of empirical tests which we believe parameters must pass in order to be considered for further theoretical analysis. While other researchers have considered one or more of these tests, we bring them together into a coherent and sound research methodology that can be used for future research in formula parameterization (See Section~\ref{sec:research-methodology}).
% \mvcomment{Do we need to say what the tests are here? The rest of the paragraph and section 3 are quite similar}
% These empirical tests are as follows: first, there must exist strong statistical evidence (in the form of well-understood correlation measures and ML predictors) that witnesses the presence of the proposed structure (in terms of a range of proposed parameter values) in industrial instances and its absence from random or crafted ones; second, strong statistical evidence for correlation between sharply-defined good (resp. bad) values of parameters and low (resp. high) solver running time; third, independent parameter scaling that enables the isolation of the effects of each parameter.
We believe that the combination of these tests provides a strong basis for a correlative parameter to be considered worthy of further analysis.

%%% Local Variables:
%%% mode: latex
%%% TeX-master: "../SAT2021"
%%% End:
\section{Preliminaries}
\label{sec:preliminaries}
% In this section, we provide a few relevant definitions and concepts used in this paper. We refer the reader to the Handbook of Satisfiability~\cite{biere2021handbook} for literature on CDCL solvers, to the chapter therein titled ``Fixed Parameter Tractability'' by Samer and Szeider~\cite{samer2009fixed} for a rich overview of relevant parameterized complexity results, and to the book ``Proof Complexity'' by Jan Kraj\'i\u cek for literature on complexity of proof systems~\cite{Krajicek19ProofComplexity}.

\noindent{\bf Variable Incidence Graph (VIG):} Researchers have proposed a variety of graphs to study graph-theoretic properties of Boolean formulas. In this work we focus on the Variable Incidence Graph (VIG), primarily due to the relative ease of computing community structure over VIGs compared to other graph representations. The VIG for a formula $F$ over variables $x_1,\ldots, x_n$ has $n$ vertices, one for each variable. There is an edge between vertices $x_i$ and $x_j$ if both $x_i$ and $x_j$ occur in some clause $C_k$ in $F$.
One drawback of VIGs is that a clause of width $w$ corresponds to a clique of size $w$ in the VIG. Therefore, large width clauses (of size $n^\varepsilon$) can significantly distort the structure of a VIG, and formulas with such large width clauses should have their width reduced (via standard techniques) before using a VIG.
%{\color{red}NF: I rewrote this note on width. Let me know if you agree with it. The old version is commented out below, since I didn't think we explicitly handled the width of clauses anywhere} 
%Observe that a clause of width $w$ corresponds to a clique of size $w$ in the VIG. Therefore, large width clauses (of size $n^\varepsilon$) can significantly distort the structure of a VIG. Hence, we limit ourselves to formulas of width $O(\log n)$. 

\noindent{\bf Community Structure and Modularity:} Intuitively, a set of variables (vertices in the VIG) of a formula forms a community if these variables are more densely connected to each other than to variables outside of the set. An (optimal) community structure of a graph is a partition $P=\set{V_1,\ldots,V_k}$ of its vertices into communities that optimizes some measure capturing this intuition, for instance modularity~\cite{newman2004modularity}, which is the one we use in this paper. Let $G=(V,E)$ be a graph with adjacency matrix $A$ and for each vertex $v \in V$ denote by $d(v)$ its degree. Let $\delta_P\colon V \times V \rightarrow \{0,1\}$ be the community indicator function of a partition, i.e.\ $\delta_P(u,v) = 1$ iff vertices $u$ and $v$ belong to the same community in $P$. The \emph{modularity} of the partition $P$ is
\begin{equation}
 Q(P):=\frac{1}{2|E|} \sum_{u,v \in V}\left[A_{u,v} - \frac{d(u) d(v)}{2|E|}\right]\delta_P(u,v)
\end{equation}
Note that $Q(P)$ ranges from $-0.5$ to $1$, with values close to $1$
indicating good community structure. We define the modularity $Q(G)$ of a graph
$G$ as the maximum modularity over all possible partitions, with corresponding partition $\mathcal{P}(G)$.
Other measures may produce radically different partitions.

\noindent{\bf Expansion of a Graph:} Expansion is a measure of graph connectivity~\cite{expanders-survey}. Out of several equivalent such measures, the most convenient to relate to HCS is \emph{edge expansion}: given a subset of vertices $S\subseteq V$, its edge expansion is $h(S)=\setsize{E(S,V \backslash S)} / \setsize{S}$, and the edge expansion of a graph is $h(G) = \min_{1 \leq |S| \leq n/2} h(S)$. A graph family $G_n$ is an expander if $h(G_n)$ is bounded away from zero. Resolution lower bounds (of both random and crafted formulas) often rely on strong expansion properties of the graph \cite{ben2001short}.

\section{Research Methodology}
\label{sec:research-methodology}
As stated above, the eventual goal of the research presented here is to discover a structure and an associated parameterization that is highly correlative with solver run time, is witnessed in industrial instances, and is rigorous, i.e., forms the basis for an upper bound on the parameterized complexity~\cite{samer2009fixed} of the CDCL algorithm. Considerable work has already been done in attempting to identify exactly such a set of parameters~\cite{newsham2014community}. However, we observed that there is a wide diversity of research methodologies adopted by researchers in the past. We bring together the best lessons learned into what we believe to be a sound, coherent, and comprehensive research methodology explained below. We argue that every set of parameters must meet the following empirical requirements in order to be considered correlative:

\begin{enumerate}[leftmargin=*]
    \item \textbf{Structure of Industrial vs. Random/Crafted Instances:} A requisite for a structure to be considered correlative is that industrial instances must fall within a certain range of values for the associated parameters, while random and crafted instances must have a different range. An example of such a structure is the community structure of the VIG of Boolean formulas, as parameterized by modularity. Multiple experiments have shown that industrial instances have high modularity (close to $1$), while random instances tend to have low modularity (close to $0$)~\cite{newsham2014community}. This could be demonstrated via a correlation experiment or by building a classifier that takes parameter values as input features.
    
    \item \textbf{Correlation between Structure and Solver Run Time:} Another requirement is correlation between parameters of a structure and solver run time. Once again, community structure (and the associated modularity parameter) forms a good example of a structure that passes this essential test. For example, it has been shown that the modularity of the community structure of industrial instances (resp. random instances) correlates well with low (resp. high) solver run time~\cite{newsham2014community}. One may use either correlation methods or suitable machine learning predictors (e.g., random forest) as evidence here.

    \item \textbf{Scaling Studies:} To further strengthen the experimental evidence, we require that the chosen structure and its associated parameters must pass an appropriately designed scaling study. The idea here is to vary one parameter value while keeping as much of the rest of the formula structure constant as possible, and see its effect on solver run time. An example of such a study is the work of Zulkoski et al.~\cite{zulkoski2018effect}, who showed that increasing the mergeability metric has a significant effect on solver run time.
    % \item \textbf{Theoretical Sanity Check:} Finally, if a parameter passes all the above correlative tests, we subject it to a series of theoretical checks that are designed to discern whether there exist formulas that have good structure and are simultaneously hard for CDCL solvers. If this is possible, then the definition of the structure requires additional strengthening or should be ruled out from further consideration. An example of a structure (resp. associated parameter) that fails this test is community structure (resp. modularity). It was shown by Mull et al. that there exist formulas whose community structure has high modularity and yet require exponential-sized proofs for resolution~\cite{mull-sat16}. While there is no guarantee of an FPT result if a parameter passes this test, it is nonetheless a very useful sanity check.
\end{enumerate}

% Also add some notes that we can't separate easy from hard perfectly
\noindent{\bf Limitations of Empirical Conclusions:} As the reader is well aware, any attempt at empirically discovering a suitable structure (and associated parameterization) of Boolean formulas and experimentally explaining the power of solvers is fraught with peril, since all such experiments involve pragmatic design decisions (e.g., which solver was used, choice of benchmarks, etc.) and hence may lead to contingent or non-generalizable conclusions. For example, one can never quite eliminate a parameter from further theoretical analysis based on empirical tests alone, for the parameter may fail an empirical test on account of benchmarks considered or other contingencies. Another well-understood issue with conclusions based on empirical analysis alone is that they by themselves cannot imply provable statements about asymptotic behavior of algorithms. However, one can use empirical analysis to check or expose gaps between the behavior of an algorithm and the tightness of asymptotic statements (e.g., the gap between efficient typical-case behavior vs. loose worst-case statements). Having said all this, we believe that the above methodology is a bare minimum that a set of parameters must pass before being considered worthy of further theoretical analysis. In Section~\ref{sec:experimental-design}, we go into further detail about how we protect against certain contingent experimental conclusions.

\noindent{\bf Limits of Theoretical Analysis:} Another important aspect to bear in mind is that it is unlikely any small set of parameters can cleanly separate all easy instances from hard ones. At best, our expectation is that we can characterize a large subset of easy real-world instances via the parameters presented here, and thus take a step towards settling the central question of solver research.

\section{Hierarchical Community Structure}
\label{sec:hcs}
Given that many human-developed systems are modular and hierarchical~\cite{simon1962architecture}, it is natural to hypothesize that these properties are transferred over to Boolean formulas that capture the behaviour of such systems. We additionally hypothesize that purely randomly-generated or crafted formulas do not have these properties of hierarchy and modularity, and that this difference partly explains why solvers are efficient for the former and not for the latter class of instances. We formalize this intuition via a graph-theoretic concept called Hierarchical Community Structure (HCS), where communities can be recursively decomposed into smaller sub-communities. Although the notion of HCS has been widely studied~\cite{clauset2008hierarchy,ravasz2002hierarchical}, it has not been considered in the context of Boolean formulas before.

% It has been noted that solvers tend to behave locally with respect to community structure, thus exploiting the inherent locality of modular formulas. While we didn't explicitly check for this with respect to HCS, we expect this behavior of solvers to lift to the HCS setting as well. 

\paragraph{\bf Hierarchical Community Structure Definition:} A \emph{hierarchical decomposition} of a graph $G$ is a recursive partitioning of $G$ into subgraphs, represented as a tree $T$. Each node $v$ in the tree $T$ is labelled with a subgraph of $G$, with the root labelled with $G$ itself. The children of a node corresponding to a (sub)graph $H$ are labelled with a partitioning of $H$ into subgraphs $\{H_1,\ldots, H_k\}$; see Figure~\ref{fig:HCS-decomposition}. There are many ways to build such hierarchical decompositions. The method that we choose constructs the tree by recursively maximizing the modularity, as in the hierarchical multiresolution method~\cite{granell2012hierarchical}. We call this the HCS decomposition of a graph $G$: for a node $v$ in the tree $T$ corresponding to a subgraph $H$ of $G$, we construct $|\mathcal{P}(H)|$ children, one for each of the subgraphs induced by the modularity-maximizing partition $\mathcal{P}(H)$, unless $|\mathcal{P}(H)|=1$, in which case $v$ becomes a \emph{leaf} of the tree. In the case of HCS decompositions, we refer to the subgraphs labelling the nodes in the tree as \emph{communities} of $G$.

\begin{figure}[t]
    \hfil
% This PDF picture can be edited with IPE http://ipe.otfried.org/
    \includegraphics[page=1,scale=.8]{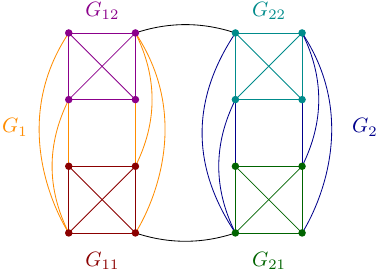}
    \hfil
    \includegraphics[page=2,scale=.8]{figures/hcs-pic.pdf}
    \hfil
    \caption{A hierarchical decomposition (right) constructed by recursively maximizing the modularity of the graph (left).} \label{fig:HCS-decomposition}
\end{figure}

We are interested in comparing the hierarchical community structures of Boolean formulas in conjunctive normal form, represented by their VIGs. For this comparison, we use the following parameters:
\begin{itemize}[leftmargin=*]
    \item The \emph{community degree} of a community in a HCS decomposition is the number of children of its corresponding node.
    \item A \emph{leaf-community} is one with degree $0$.
    \item The \emph{size} of a community is its number of vertices.
    \item The \emph{depth} or \emph{level} of a community is its distance from the root.
    % \item a \emph{level} of communities is the set of all communities of a given depth {\color{red}NF: Not a parameter, so it should be defined earlier }
    \item The \emph{inter-community edges} of a partition $\mathcal{P}(H)$ are $E_\intercomm(H)=\bigcup_{H_i,H_j \in\mathcal{P}(H)}\allowbreak E(H_i,H_j)$, the edges between all pairs of subgraphs, and their endpoints $V_\intercomm(H)=\bigcup E_\intercomm$ are the \emph{inter-community vertices}. Note that
    %$\setsize{E_\intercomm(H)} = \sum_{i=1}^k \setsize{H_i}h(H_i)$, hence 
    $2\setsize{E_\intercomm(H)}/\setsize{H}$ is an upper bound for the edge expansion of $H$. 
\end{itemize}
% MV: no need to talk about proxies: if communities have equal size,
% then this is truly a measure of expansion.

Note that these parameters are not independent. For example, changes in the number of inter-community vertices or inter-community edges will affect modularity. Since our hierarchical decomposition is constructed using modularity, this could affect the entire decomposition and hence the other parameters.
%%% Local Variables:
%%% mode: latex
%%% TeX-master: "../SAT2021"
%%% End:

\section{Empirical Results}
\label{sec:empirical-results}
    We now turn to the results of our empirical investigations with HCS parameters. We computed $49$ unique parameters capturing the HCS structure, together with several base parameters
    % (see Appendix~\ref{app:parameters} for a comprehensive list) \sm{(This is likely not a great idea; you first direct the reader to the Appendix, and then when they scroll all the way down, you tell them it's on a web page? That's mean. A better idea, I would think, is to remove that section from Appendix entirely and replace all references to Appendix~\ref{app:parameters} with link to our web page where the parameters live)}
    measuring different structural properties of input VIGs\footnote{For a complete list, see: \url{https://satsolvercomplexity.github.io/hcs/data}}. To compute the hierarchical community structure, we used the Louvain method~\cite{blondel2008louvain} to detect communities and recursively call the Louvain method to produce a hierarchical decomposition. The Louvain method is considered to be more efficient and produces higher-modularity partitions than other known algorithms.
    
\label{sec:experimental-design}

\vspace{0.2cm}
\noindent{\bf Experimental Design.} In our experiments we used a set of 10\,869 instances from five classes, which we believe is sufficiently large and diverse to draw sound empirical conclusions \inLongVersion{(See Table~\ref{tab:benchmark_instances} in Appendix~\ref{app:benchmark_instances})}\inShortVersion{(See Appendix~\cite{li2021hierarchical})}. We did not explicitly balance the ratio of satisfiable instances in our benchmark selection because we expect our methods to be sufficiently robust as long as the benchmark contains a sufficient number of SAT and UNSAT instances.

In order to get interesting instances for modern solvers, we considered formulas which were previously used in the SAT competition from 2016 to 2018~\cite{SATcomp}. Specifically, we took instances from five major tracks of the competition: agile, verification, crypto, crafted, and random. We also generated additional instances for some classes: for verification, we scaled the number of unrolls when encoding finite state machines for bounded model checking; for crypto, we encoded SHA-1 and SHA-256 preimage problems; for crafted, we generated combinatorial problems using \texttt{cnfgen}~\cite{cnfgen2017}; and for random, we generated $k$-CNFs at the corresponding threshold CVRs for \(k \in \{3, 5\}\), again using \texttt{cnfgen}. A summary of the instances is presented in the Appendix.

We preprocessed all formulas using the MiniSAT preprocessor~\cite{een2005preprocessing}, and used MapleSAT~\cite{liang2016maplesat} as our CDCL solver of choice since it is a leading and representative solver. The core of the preprocessing 
%techniques 
%being used rely on 
was a combination of 
%a combined method which performs 
variable elimination with subsumption and self-subsuming resolution~\cite{een2005preprocessing}. For computing satisfiability and running time, we used SHARCNET's Intel E5-2683 v4 (Broadwell) 2.1 GHz processors~\cite{sharcnet-graham}, limiting the computation time to 5\,000 seconds\footnote{This value is the time limit used by the SAT competition.}. For parameter computation
% , we again used SHARCNET, but
we did not limit the type of processor because structural parameter values are independent of processing power.

% \paragraph{\bf Parameters.}
% We consider a small number of base parameters, in addition to HCS-based ones, to measure different structural properties of input VIGs. \sm{Is the following information necessary here? It's a bit overwhelming for the reader.} The base parameters are the number of variables, clauses, and binary clauses; the average and variance in the number of occurrences of each variable; the number of inter-community variables and inter-community edges; the numbers, sizes, degrees, and depths of the communities; modularity; and mergeability (see supplementary material\footnote{The supplemental material is available here as part of the full version of the paper: \url{https://satsolvercomplexity.github.io/hcs/}} for a comprehensive list of parameter combinations).

%~\ref{app:parameters
%{\color{blue} Maybe rather than referring to the appendix of the full version we can just refer to the full version? This will give us a freedom to restructure the paper if we like to do so later.}

\subsection{\bf HCS-based Category Classification of Boolean Formulas}
\label{sec:empirical-result-category}
The question whether our set of HCS parameters is able to capture the underlying structure that differentiates industrial instances from the rest naturally lends itself to a classification problem. Therefore, we built a multi-class Random Forest classifier to classify a given SAT instance into one of the five categories: verification, agile, random, crafted, or crypto. Random Forests~\cite{Breiman2001RandomForest} 
%are known for their ability to 
can
learn complex, highly non-linear relationships while having simple structure, and hence are easier to interpret than other models (e.g., deep neural networks).

We used an off-the-shelf implementation of a Random Forest classifier implemented as \texttt{sklearn.ensemble.RandomForestClassifier} in scikit-learn~\cite{scikit-learn}. Using the default set of parameters in scikit-learn version 0.24, we trained our classifier using 800 randomly sampled instances of each category on a set of 49 features to predict the class of the problem instance. We found that our classifier performs extremely well, giving an average accuracy score of \(0.99\) over 5 cross-validation datasets. Further, the accuracy did not depend on our choice of classifier. In particular, we found similar accuracy scores when we used C-Support Vector classification~\cite{Platt99probabilisticoutputs} instead of Random Forests. 

We also determined the five most important features used by our classifier. 
Since several features in our feature set are highly correlated, we first performed a hierarchical clustering on the feature set based on Spearman rank-order correlations. From the 22 clusters that were generated, we arbitrarily chose a single feature from each cluster as a representative member of the cluster~f\footnote{See \url{https://satsolvercomplexity.github.io/hcs/data} for details on clusters.}. Using these 22 representative features, we then computed their importance using permutation importance~\cite{Breiman2001RandomForest}. In Table~\ref{tab:randomforest} we list the top five representative features from each cluster, not necessarily in order of importance.

\begin{table}[t]
\bgroup
\def\arraystretch{1.2}%
\begin{center}
\caption{Results for classification and regression experiments with HCS parameters. For regression we report $R^2$ values, whereas for classification we report the mean of the balanced accuracy score over 5 cross-validation datasets.}
\begin{tabular}{ m{0.19\textwidth} | m{0.39\textwidth}| m{0.26\textwidth}  }
 & \textbf{Category} & \textbf{Runtime} \\
\hline
Score & \(0.996\pm 0.001\) & \(0.825 \pm 0.016\) \\
\hline
Top 5 features & \texttt{rootMergeability} \newline \texttt{maxInterEdges/CommunitySize} \newline
\texttt{cvr} \newline \texttt{leafCommunitySize} \newline \texttt{lvl2InterEdges/lvl2InterVars} & \texttt{rootInterEdges} \newline \texttt{lvl2Mergeability} \newline \texttt{cvr} \newline
\texttt{leafCommunitySize} \newline \texttt{lvl3Modularity} 
\end{tabular}
\label{tab:randomforest}
\end{center}
\egroup
\end{table}

\subsection{\bf HCS-based Empirical Hardness Model}
\label{sec:empirical-result-hardness}
We used our HCS parameters to build an empirical hardness model (EHM) to predict the run time of MapleSAT on a given instance. Since the solving time is a continuous variable, we considered a regression model built using Random Forests, namely 
%. We used an 
% an implementation of  
\texttt{\justify{sklearn.ensemble.RandomForestRegressor}} from
%an implementation available in 
scikit-learn~\cite{scikit-learn}. Before training our regression model, we removed instances which timed-out at 5\,000 seconds and those instances that were solved almost immediately (in zero seconds) to avoid issues with artificial cut-off boundaries. We then trained our Random Forest model using the default set of parameters in scikit-learn version 0.24 to predict the logarithm of the solving time using the remaining 1\,880 instances, equally distributed between different categories.

We observed that our regression model performs quite well, with an  $R^2$ score~\cite{Steel60} of 0.83, which implies that in the training set, almost 83\% of the variability of the dependent variable (i.e., in our case, the logarithm of the solving time) is accounted for, and the remaining 17\% is still unaccounted for by our choice of parameters. Similar to category classification, we also looked for the top five predictive features used by our Random Forest regression model using the exact same process. We list the representative features in Table~\ref{tab:randomforest}.

Additionally, we trained our EHM on each category of instances separately. We found that the performance of our EHM varies with instance category. Concretely, agile outperformed all other categories with an average $R^2$ value of $0.94$, followed by random, crafted and verification instances with scores of $0.81, 0.85$ and $0.74$ respectively. The worst performance was shown by the instances in crypto, with a score of $0.48$. 

\subsection{HCS Parameter Value Ranges for Industrial/Random Instances}\label{sec:industrial-params} 

In the previous section, we reported on the top five parameters most predictive of the solver runtime in the context of our Random Forest regression model. These parameters can be divided into five distinct classes of parameters: mergeability-based, modularity-based, inter-community edge based, CVR, and leaf-community size. The parameters CVR, mergeability and modularity have been studied by previous work. CVR~\cite{cheeseman1991cvr} is perhaps the most studied parameter among the three. Zulkoski~et al. ~\cite{zulkoski2018effect} showed that mergeability, along with combinations of other parameters, correlates well with solver run time; Ansotegui~et al.~\cite{ansotegui2012community} showed that industrial instances have good modularity compared to random instances; and Newsham et al.~\cite{newsham2014community} showed that modularity has good-to-strong correlation with solver run time. We examined the remaining parameters, i.e. inter-community edge based parameters (\texttt{rootInterEdges})  and leaf-community size to gain a better understanding of the impact of these parameters on the problem structure and solver runtime, respectively. In this subsection, we look at how HCS parameters scale as the size of industrial instances increases. And in Section~\ref{sec:scaling-experiments-HCS-generator}, we introduce a HCS instance generator, which we use to perform a set of controlled experiments.  We then discuss how the hardness of the instances changes when certain HCS parameters are increased/decreased.

\begin{figure}[t]
\centering     %%% not \center
{\includegraphics[width=0.8\linewidth]{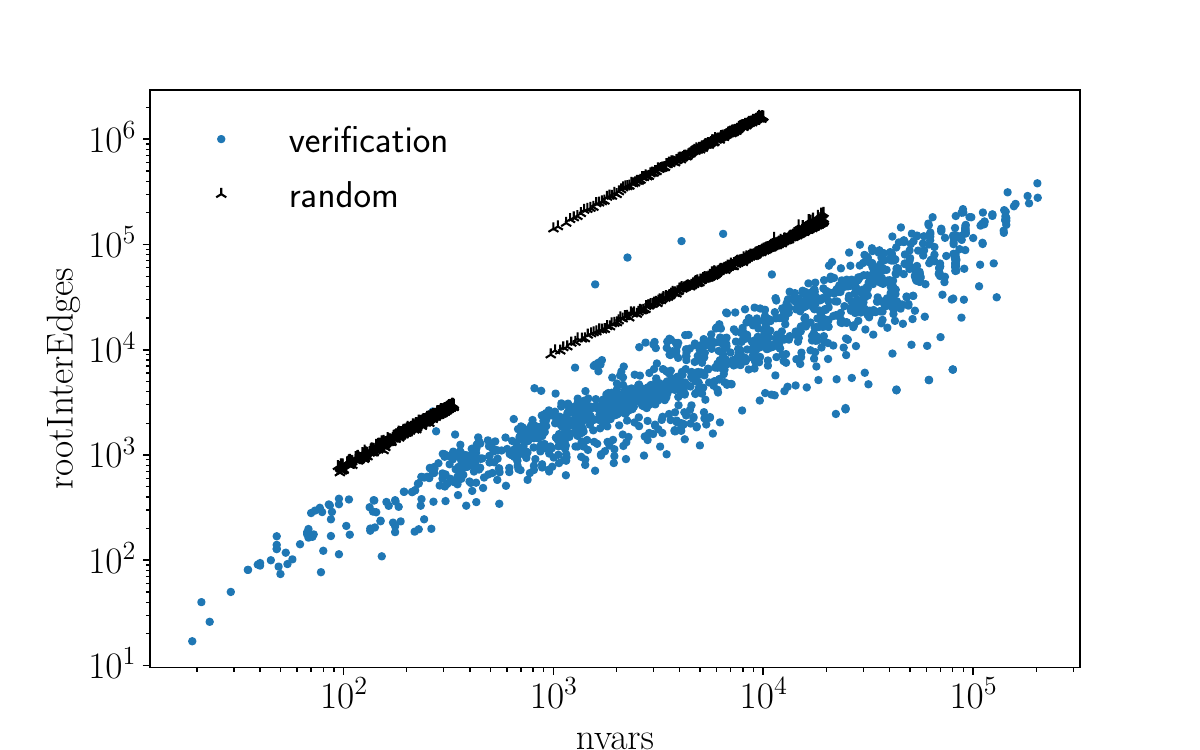}}
\caption{\label{fig:scaling_inter_edges_n} Dependence of the number of inter-community edges  at the root level (rootInterEdges) vs. the number of variables in a formula, for verification and random instances in our dataset. The two distinct lines (starting from the bottom) for random instances correspond to 3-CNFs and 5-CNFs, respectively.
%\mvcomment{Write "number of variables" instead of "n" in plot so we do not have to explain. Also, spell out whathever rootInterEdges means (in plot). Possibly make plot shorter (16/9 ratio)}
}

\end{figure}

\paragraph{\bf Observations.}
We observe that hierarchical decomposition generally produces leaf communities of maximal size comparable to the largest clause width,  
%(which ranges up to 32 in agile, 21 in verification, as opposed to 3 and 5 in random and crypto)
except for very unbalanced formulas (easy for other reasons). The community degree is highest at root level of every instance, and seems to be bounded by  $O(\log n)$. 
%41 for agile, 60 for verification, 27 for crypto, 15 for crafted, and 24 for random. 
This fits within the range of parameters considered in Section~\ref{sec:theoretical-results}.

%First, we observe that agile and verification instances have the smallest average leaf community size (5), and crafted and random have the largest leaf community size. 
%Moreover, the maximal leaf community size in agile and verification is roughly equal to maximal clause width: which is an expected lower bound as each clause induces a clique in a VIG.   This is in spite of the fact that the largest clause width in verification instances is higher, at about 20 (independent of the number of variables) vs. 3 and 5 in random. 

In Figure~\ref{fig:scaling_inter_edges_n}, we show how the inter-community edge based parameter \texttt{\justify{rootInterEdges}} scales with the number of variables in a formula, for verification and random instances. 
%To be more specific, 
We note that for random instances,  \texttt{rootInterEdges} grows linearly with the instance size, whereas in verification instances it grows sublinearly.
%Note that CVR in verification instances is also somewhat higher.
% \ak{could you put the actual range for CVR or an average CVR for verification here?}
This supports our intuition that graphs of hard (random) instances are expanders, whereas graphs of industrial instances are not. 
%\ak{We need to double-check if CVR and max clause width for industrial instances changes with n; if it does, that could directly affect the number of root inter-edges as a function of instance size. } \cl{I'll look into this}
%... \sm{End with a conclusion? Referring to expansion? Why did we comment the statement about expansion out?.}  
% Recall that the average number of inter-community edges $\setsize{E_\intercomm(G)}/k$ is related to the average expansion --- this suggests that industrial instances have lower edge expansion than random instances and further motivates inter-community edge based parameters to be analyzed theoretically.

\subsection{Scaling Experiments with HCS parameters}
\label{sec:scaling-experiments-HCS-generator}

\paragraph{\bf Instance Generator.}
To isolate the effects of HCS parameters on solver runtime, we built an HCS instance generator to construct SAT instances with varying leaf-community size and other HCS parameters. On a high level, the instance generator constructs instances bottom-up, starting with random disjoint formulas of predefined CVR as leaf communities, then combining them recursively by introducing bridge clauses with variables in at least two sub-communities to form super-communities at that level, which in turn are combined at the following level. We point out that in our generator, modularity is specified implicitly through the above parameters, and we do not control for mergeability at all. We refer the reader to the works by Zulkoski et al.~\cite{zulkoski2018effect} and Gir\'aldez-Cru~\cite{giraldez2016} for literature on the empirical behaviours of mergeability and power law, respectively.

It is important to note that our HCS instance generator is not intended to be perfectly representative of real-world instances. In fact, there are multiple properties of our generated instances which are not reflective of industrial instances. For example, our generator assumes that all leaf-communities have the same size and depth, which is demonstrably untrue of industrial instances. In some cases, the communities produced by our generator might not be the same as the communities which would be detected using the Louvain method to perform a hierarchical community decomposition. For example, it might be possible to further decompose the generated ``leaf-communities" into smaller communities. Thus, our generator is only intended to demonstrate the effect of varying HCS parameters on solver runtime.

\paragraph{\bf Observations.} We constructed formulas with varying CVR, power law parameter, hierarchical degree, depth, inter-community edge density, inter-community variable density, and clause width. We found evidence which suggests that increasing any of leaf-community size, depth, or community degree, while keeping every other HCS parameter fixed, increases the overall hardness of the generated formula. For example, we found that changing the size of leaf-communities from 15 variables to 20, the solving time changed from 4.96 seconds to upwards of 5000 seconds. Similarly, changing the depth from 4 to 5 resulted in an increase in solving time from 0.03 seconds to over 5000 seconds.

\subsection{Discussion of Empirical Results}
\label{sec:analysis-of-empirical-results}
The goal of our experimental work was to first ascertain whether HCS parameters can distinguish between industrial and random/crafted instances, and whether these parameters show any correlation with CDCL solver runtime. The robustness of our classifier indicates that HCS parameters are indeed representative of the underlying structure of Boolean formulas from different categories. Further, our empirical hardness model confirms that the correlation of HCS parameters with solver run time is strong---much stronger than previously proposed parameters. We also find that our HCS parameters are more effective in capturing the hardness or easiness of formulas from industrial/agile/random/crafted, but not crypto. The crypto class is an outlier. It is not clear from our experiments (nor any previous ones) as to why crypto instances are hard for CDCL solvers. 

We also identified the top five (representative) parameters in terms of their importance in predicting the category (classification) or runtime of an instance (regression). The accuracy for classification and regression with only the top features features dropped to 0.94 and 0.77, respectively, suggesting that only a few parameters are likely to play a role in closing the question on why solvers are efficient for industrial instances. Note that a classification accuracy of 0.99 is likely to suggest that our model is over-fitting. Fortunately, in our case our models are trained over a large set of instances obtained via very different methods (e.g., random over various widths, different kinds of crafted, verification instances from different domains), and therefore, there is sufficient entropy in our data set so that overfitting is unlikely to be a concern for the robustness of our model.

In our investigation of parameters based on inter-community edges and leaf-community size, we found that industrial instances typically have small average leaf-community size, high modularity, and relatively few inter-community edges, while random/crafted have larger average leaf-community size, low modularity, and a very high number of inter-community edges.
%\sm{Do we support this statement earlier?  I don't think so.}
This suggests that leaf-community size and the fraction of inter-community edges, as well as  community degree, are important HCS parameters to consider further.
%\sm{The theoretical section talks about leaf-community size, community degree, and inter-community edges to build up to the theoretical results. Check for consistency.}

%%% Local Variables:
%%% mode: latex
%%% TeX-master: "../SAT2021"
%%% End:

\section{Theoretical Results}
\label{sec:theoretical-results}

In this section,  we show that hierarchical decomposition avoids some of the pitfalls of flat community structure, a promising correlative parameter for explaining easiness of the industrial instances~\cite{newsham2014community}. Community structure was theoretically shown to be insufficient by Mull et al.~\cite{mull-sat16}, where they showed that formulas with good community structure can have random formulas embedded in them either in a community or over the inter-community edges. To avoid embedding a random formula in a community, its size has to be small (relative to the entire graph), and avoiding expanders over inter-community edges requires that there not be too many communities. A way to be able to restrict both is to consider a hierarchical decomposition, limiting both the number of sub-communities (community degree) in each level of the decomposition, as well as the leaf community size thus avoiding the most important issues that flat community structure suffers from.   

Based on our experimental work, we narrow down the most predictive HCS parameters to be leaf-community size, community degree, and the number inter-community edges in each decomposition. % (a proxy for edge expansion).
%\mvcomment{I am confused. Isn't the number of inter-community edges divided by community size in that list, which effectively makes it expansion?} 
%The difference is that it is outgoing edges of a sub-community over sub-community size that is an expansion, not all the edges within the decomposition. The best we can do with only inter-edges and community size, as far as I can tell, is to bound the edge expansion of that community by E_IC/(|H|/2). 
These parameters also play a role in our theoretical results below. For a formula to have ``good'' HCS, we restrict the parameter ranges as follows: the graph must exhibit $O(\log n)$ leaf-community size  and community degree, and have a small number of inter-community edges in each decomposition of a community. These assumptions are supported by our experimental results \inLongVersion{(see Section~\ref{sec:industrial-params})}\inShortVersion{(See Appendix~\cite{li2021hierarchical})}. We show that these restrictions are necessary in \inLongVersion{Section~\ref{sec:counterexamples}}\inShortVersion{Appendix}, where we also present a significantly simplified proof of the result of Mull et al.~\cite{mull-sat16}. 
%that there are families of formulas with good modularity, but that require superpolynomial size resolution refutations. 
%In order to avoid this issue we must control the structure of the leaf communities, and a simple and natural way to do this is to limit ourselves to HCS decompositions with small leaf communities. An advantage of HCS over community structure is that it is far less restrictive to limit the size of the leaf communities in decomposition.

%The need for a strong upper bound on the leaf community size comes from the observation that both flat community structure and HCS do not place any restrictions on the structure of the leaf communities. Therefore, if the leaf communities are large, then one can hide hard-to-refute unsatisfiable formulas in them. To avoid this issue, we bound the size of leaf communities in good HCS formulas. This is much less restrictive in the HCS setting than for flat community structure, while still avoiding the issues raised by Mull et al.~\cite{mull-sat16}. Further, using this observation, we give a significantly simplified proof (in Section \ref{sec:counterexamples}) of the result of Mull et al.~\cite{mull-sat16} showing that there are large families of formulas with good modularity, but which require superpolynomial size resolution refutations. 
%limit ourselves to HCS decompositions with small leaf communities. An advantage of HCS over community structure is that it is far less restrictive to limit the size of the leaf communities in decomposition.

\paragraph{\bf Bounding the Size of Expanders in Good HCS Graphs.}
Ideally, we would like to be able to prove an upper bound on proof size or search time which depends on the HCS parameters of a formula. Unfortunately, our current state of understanding does not allow for that. A step towards such a result would be to show that formulas with good HCS (and associated parameter value ranges) are not susceptible to typical methods of proving resolution lower bounds. Currently, all resolution bounds exploit \emph{expansion} properties -- typically \emph{boundary expansion} -- of the CNF formula (or more precisely its bipartite constraint-variable incidence graph (CVIG)). Therefore our goal is to show that formulas with good HCS parameters have poor expansion properties, and also do not have large expanding subgraphs embedded within them. Note that the VIG is related to the CVIG by taking the square of its adjacency matrix, from where it follows that, for formulas with low width, if the VIG is not edge-expanding then the CVIG is not vertex-expanding. Furthermore, again for formulas with low width, vertex expansion is closely related to boundary expansion. Hence we only need to focus on VIG edge expansion.
%, such as those that exploit expansion properties of the CNF formula (see e.g.~\cite{ben2001short}). More precisely, we would like to show that formulas with good HCS parameters have VIGs with poor expansion properties, and do not have large expanding subgraphs embedded within them. %, since most resolution lower bounds rely on expansion properties. 
With this in mind, we state several positive and negative results.

%First, we provide a significantly simplified proof of the result of Mull et al.~\cite{mull-sat16} showing that there are dense families of formulas with good modularity, but which require superpolynomial size resolution refutations. This proof relies on the fact that in a community structure decomposition does not provide any control over the structure of the leaf communities. In order to avoid these formulas, we limit the size of the leaf communities to be $O(\log n)$. Practical instances having small leaf communities is supported by our experimental results. 

First, we observe \inLongVersion{(in Section \ref{sec:expanders})}\inShortVersion{(see Appendix)} that if the number of inter-community edges at the top level of the decomposition grows sub-linearly with $n$ and at least two sub-communities contain a constant fraction of vertices, then this graph family is not an expander. Unfortunately, we can also show \inLongVersion{(in Section~\ref{sec:counterexamples})}\inShortVersion{(see Appendix)} that graphs with good HCS can simultaneously have sub-graphs that are large expanders, with the worst case being very sparse expanders, capable of ``hiding'' in the hierarchical decomposition by contributing relatively few edges to any cut. To avoid that, we require an explicit bound on the number of inter-community edges, in addition to small community degree and small leaf-community size. This lets us prove the following statement.
 
\begin{theorem}\label{hcs-expansion}
 Let $G=\{G_n\}$ be a family of graphs. Let $f(n) \in \omega(poly(\log n))$, $f(n) \in O(n)$. Assume that $G$ has HCS with the number of inter-community edges $o(f(n))$ for every community $C$ of size at least $\Omega(f(n))$ and depth is bounded by $O(\log n)$. Then $G$ does not contain an expander of size $f(n)$ as a subgraph.
\end{theorem} 

Note that our experiments show that the leaf size and depth in industrial instances are relatively small and the number of inter-community edges grows slowly. From this and the theorem above, we can show that graphs with \emph{very} good HCS properties do not contain linear-sized expanders. 

%Note that our experiments show that the leaf size and depth in industrial instances are bounded by $O(\log n)$ and the number of inter-community edges grows slowly. From this and the theorem above, we can show that graphs with very good HCS do not contain linear-sized expanders. 

\paragraph{\bf Lower Bounds Against HCS:} 
We are also able to show several of strong lower bounds on formulas with good HCS \inLongVersion{(see Section~\ref{sec:counterexamples})}\inShortVersion{(see Appendix)}. For a number of combinations of parameters, we show that restricting ourselves to ``good'' ranges of these parameters does not rule out formulas which require superpolynomial size resolution refutations.  Our most striking counterexample essentially shows that if the degree of the VIG is more than a small constant, then it is possible to embed formulas of superpolynomial resolution complexity. In contrast with the previous results on the size of embeddable expanders in instances with good HCS, this result shows how to embed a sparse expander of superlogarithmic size.

\paragraph{\bf Hierarchical vs. Flat Modularity:} 
It is well-known that modularity suffers from a \emph{resolution limit} and cannot detect communities smaller than a certain threshold~\cite{fortunato2007modularity}, and that HCS can avoid this problem in some instances~\cite{blondel2008louvain}. In \inLongVersion{Section~\ref{sec:resolution_limit}}\inShortVersion{Appendix} we provide an asymptotic, rigorous statement of this observation.

\begin{theorem}
  \label{th:hcs-resolution}
  There exists a graph $G$ whose natural communities are of size $\log(n)$ and correspond to the (leaf) HCS communities, while the partition maximizing modularity consists of communities of size $\Theta\Bigl(\sqrt{n/\log^3n}\Bigr)$.
\end{theorem}
%%% Local Variables:
%%% mode: latex
%%% TeX-master: "../SAT2021"
%%% End:

\section{Related Work}
\label{sec:related-work}
%\subsection{Correlative Parameters and Empirical Models}
\noindent\textbf{Community Structure:} Using modularity to measure community structure allows one to distinguish industrial instances from randomly-generated ones~\cite{ansotegui2012community}. Unfortunately, it has been shown that expanders can be embedded within formulas with high modularity~\cite{mull-sat16}, i.e., there exist formulas that have good community structure and yet are hard for solvers. %Although we identify this parameter as correlative, FPT results exist for a related concept, \textit{hitting community structure} and \textit{h-modularity}~\cite{ganian2015hmodularity}. However, there is no evidence that hitting community structure is witnessed in industrial instances.

\noindent\textbf{Heterogeneity:} Unlike uniformly-random formulas, the variable degrees in industrial formulas follow a powerlaw distribution~\cite{ansotegui2009powerlaw}. However, degree heterogeneity alone fails to explain the hardness of SAT instances. Some heterogeneous random k-SAT instances were shown to have superpolynomial resolution size \cite{blasius2020heterogeneity}, making them intractable for current solvers.

\noindent\textbf{SATzilla:} SATzilla uses 138 disparate parameters~\cite{xu2012satzilla}, some of which are probes aimed at capturing a SAT solver's state at runtime, to predict solver running time. Unfortunately, there is little or no evidence that most of these parameters are amenable to theoretical analysis.

\noindent\textbf{Clause-Variable Ratio (CVR):} Cheeseman et al.~\cite{cheeseman1991cvr} observed the satisfiability threshold behavior for random k-SAT formulas, where they show formulas are harder when their CVR are closer to the satisfiability threshold. Outside of extreme cases, CVR alone seems to be insufficient to explain hardness (or easiness) of instances, as it is possible to generate both easy and hard formulas with the same CVR~\cite{friedrich2017phasetransitions}. Satisfiability thresholds are poorly defined for industrial instances, and Coarfa et al.~\cite{coarfa2003} demonstrated the existence of instances for which the satisfiability threshold is not equal to the hardness threshold.

% Experiments based on industrial instances show very little correlation between CVR and solver running time \cite{reed1992cvr}.

\noindent\textbf{Treewidth:} Although there are polynomial-time non-CDCL algorithms for SAT instances with bounded treewidth~\cite{alekhnovich2011branchwidth}, treewidth by itself does not appear to be a predictive parameter of CDCL solver runtime. For example, Mateescu~\cite{mateescu2011treewidth} showed that some easy instances have large treewidth, and later it was shown that treewidth alone does not seem to correlate well with solving time~\cite{zulkoski2018effect}.

\noindent\textbf{Backdoors:}
In theory, the existence of small backdoors~\cite{williams2003backdoors,samer2008backdoor} should allow CDCL solvers to solve instances quickly, but empirically backdoors have been shown not to strongly correlate with CDCL solver run time \cite{kilby2005backdoors}. %To the best of our knowledge, the literature does not contain any strong empirical support for backdoors alone as a predictive parameter.
% Backdoors are \(\mathcal{NP}\)-hard to compute~\cite{dilkina2007tradeoffs}, and despite work in improving search algorithms \cite{livanbeek2013backdoors}, there is no practical algorithm that scales well.

%%% Local Variables:
%%% mode: latex
%%% TeX-master: "../SAT2021"
%%% End:

\section{Conclusions and Future Work}
\label{sec:conclusion}
In this paper, we propose HCS as a correlative set of parameters for explaining the power of CDCL SAT solvers over industrial instances, which also has good theoretical properties. Empirically, HCS parameters are much more predictive than previously proposed correlative parameters in terms of classifying instances into random/crafted vs. industrial, and in terms of predicting solver run time. Among the top five most predictive parameters, three are HCS parameters, namely leaf-community size, modularity and fraction of inter-community edges. The remaining two are cvr and mergeability. We further identify the following core HCS parameters that are the most predictive among all HCS parameters, namely, leaf-community size, modularity,
%\jc{I thought it was community degree and not modularity?}
and fraction of inter-community edges. Indeed, these same parameters also play a role in our subsequent theoretical analysis, where we show that counterexamples to flat community structure do not apply to HCS, and that restricting certain HCS parameters limits the size of embeddable expanders. In the final analysis, we believe that HCS, along with other parameters such as mergeability or heterogeneity, will play a role in finally settling the question of why solvers are efficient over industrial instances.

\bibliographystyle{splncs04}
\bibliography{reference}

\inLongVersion{
\newpage
\section*{Appendix}
\label{sec:appendix}
\appendix

\section{Summary of Benchmark Instances}
\label{app:benchmark_instances}
Table~\ref{tab:benchmark_instances} contains the number of SAT and UNSAT instances per category of instances that were used in the experiments presented in this paper. 
\setlength{\tabcolsep}{8pt}
\begin{table}[t]
    \centering
    \caption{Summary of benchmark instances}
    \begin{tabular}{l| c | c | c }
    \hiderowcolors
        \textbf{Class Name} & \textbf{\#SAT} & \textbf{\#UNSAT} & \textbf{\#UNKNOWN} \\
        \hline
        \texttt{agile} & 372 & 475 & 8 \\
        \texttt{crafted} & 300 & 99 & 617 \\
        \texttt{crypto} & 1052 & 355 & 3496 \\
        \texttt{random} & 276 & 224 & 661 \\
        \texttt{verification} & 197 & 2345 & 392 \\
    \end{tabular}
    \label{tab:benchmark_instances}
\end{table}

\section{Inter-community Edges vs. Number of Variables}
\label{app:scaling_inter_edges_n_all}
In our experiments, we compute the number of inter-community edges in the VIG of each formula and observe how it scales as the number of variables in the formula increases. We note that the behaviour is different for different classes of instances. Figure~\ref{fig:scaling_inter_edges_n_all} contains logarithmically-scaled plots of this behaviour for each of the instance classes considered in this work.

\begin{figure}[t]
\centering     %%% not \center
{\includegraphics[width=\linewidth]{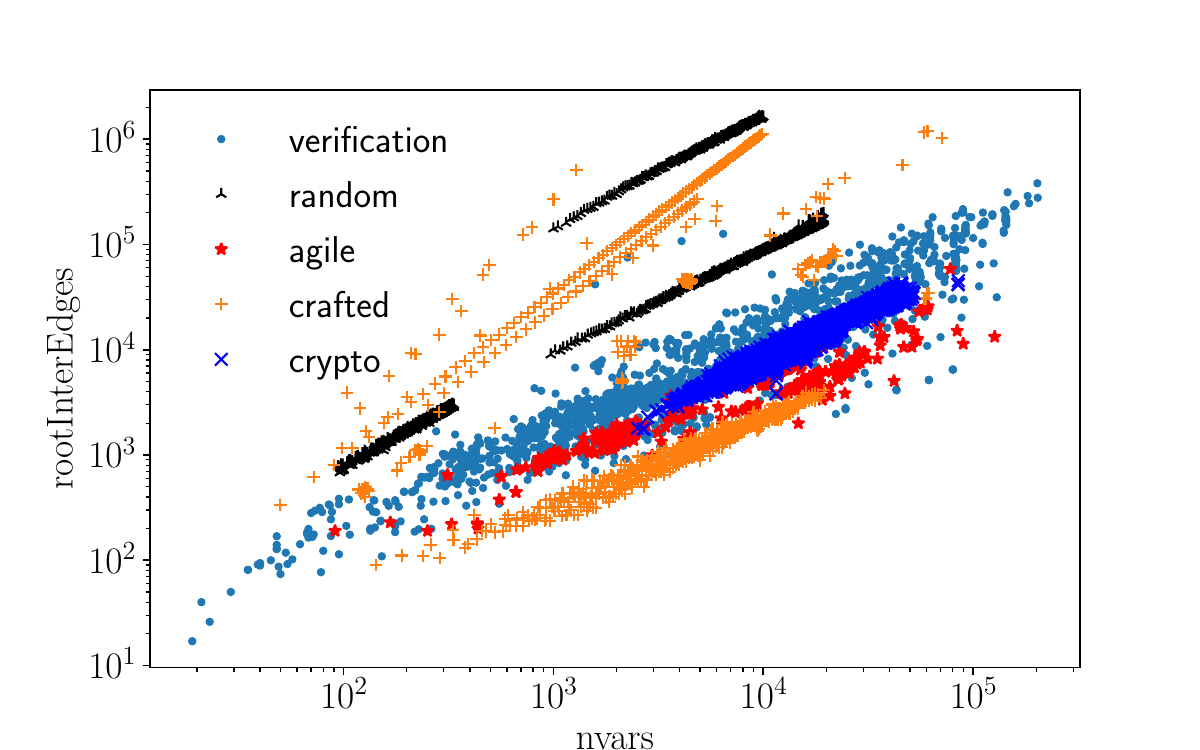}}
\caption{Number of inter-community edges  at the root level (rootInterEdges) vs. the number of variables (nvars) in a formula for all categories.}
\label{fig:scaling_inter_edges_n_all}
\end{figure}

% \begin{figure} \centering \subcaptionbox{Agile}
% {\includegraphics[width=0.9\linewidth]{}} \end{figure}
% \begin{figure} \ContinuedFloat \centering \subcaptionbox{Verification}
% {\includegraphics[width=0.9\linewidth]{}} \end{figure}
% \begin{figure} \ContinuedFloat \centering \subcaptionbox{Crafted}
% {\includegraphics[width=0.9\linewidth]{}} \end{figure}
% \begin{figure} \ContinuedFloat \centering \subcaptionbox{Crypto}
% {\includegraphics[width=0.9\linewidth]{}} \end{figure}
% \begin{figure} \ContinuedFloat \centering \subcaptionbox{Random}
% {\includegraphics[width=0.9\linewidth]{}} \end{figure}

\newpage
\section{Theoretical Results}
\label{sec:detailed-theoretical-results}

\subsection{Embedding expanders in HCS graphs} 
\label{sec:expanders}

We show that a good enough HCS decomposition of a graph disallows large expanders to be embedded in it. Consider graphs with following HCS restrictions.

%Here, a good HCS for a graph family means that for every graph in the family its hierarchical decomposition has relevant parameters growing in a ``good" way asymptotically, or remaining constant for arbitrary large $n$.  

\begin{enumerate} 
 \item Small leaf size: all leaves are of size at most $O(\log n)$.  
  \item Bounded community degree: for every community at every level, the number of immediate sub-communities is bounded by $O(\log n)$. 
 \item Small number of inter-community edges (the smaller the better).  
 %Small community expansion: for every sub-community at every level, the fraction of outgoing edges to other sub-communities at the same level (children of the same super-community) is  $o(f(n))$, where the smaller $f(n)$ is, the better the parameter.  
\end{enumerate} 

The reason for the small leaf size condition is to avoid the counterexample of Mull, Fremont, and Seshia~\cite{mull-sat16}.
%as detailed in Appendix C.3. 
%which is it? find \label{}
The second condition, bounded community degree, avoids counterexamples in section~\ref{sec:counterexamples}. The last property is needed to avoid the counterexample at the end of this section.

%Let us start with $d$-regular graphs, for a constant $d$. 
In the following, we use $G$ to refer both to a family of graphs as well as a specific representative of that family (the reason for the family is to be able to talk about asymptotic behaviour of parameters). 

\begin{claim}\label{hcs-flat}  
Suppose a graph $G$ with $|G|=n$ has a community decomposition into $C_1, \dots, C_\ell $ maximizing the modularity (flat, single-level decomposition). Let $H$ be a subgraph of $G$, with 
$|H| = f(n)$ for some $f(n) \in O(n)$. Then either  $H$ is not an expander,  or the number of inter-community edges is $\Omega(f(n))$, or there exists a community containing a subgraph of $H$ of size at least $f(n)(1-o(1))$. 
%Then if the number of inter-community edges among $C_1,\dots, C_\ell$ is $o(f(n))$, then $H$ is not an expander. 
\end{claim}
\begin{proof} 
Suppose that $H$ is an expander graph of size $f(n)$ with edge expansion $h(H)=\alpha$,  and let $S$ be the largest subset of $H$ which is within a single $C_i$ (wlog, say $S \subseteq C_1$). Suppose that $|S| = \delta \cdot f(n)$ for some constant $\delta$. Then the number of edges out of $C_1$ is at least $\alpha\cdot \min\{\delta,1-\delta\} f(n)$, and thus the the total number of inter-community edges is $\Omega(f(n))$. If $|S| = o(f(n))$,  then it is possible to split communities $C_1 \dots C_t$ into 2 sets with roughly the same number of vertices of $H$ on each side (up to a subconstant factor); the number of edges of $H$ going across this cut should be close to $\alpha f(n)$, again $\Omega(f(n))$.
\end{proof} 

%The following claim allows us to argue that a dense enough subgraph cannot be an expander. 
% cannot embed an expander into very sparse VIGs: TODO.  

\begin{theorem}\label{hcs-embed} 
 Let $G=\{G_n\}$ be a family of graphs. Let $f(n) \in \omega(poly(\log n))$, $f(n) \in O(n)$. Assume that $G$ has HCS with the number of inter-community edges $o(f(n))$ for every community $C$ of size at least $\Omega(f(n))$ and depth is bounded by $O(\log n)$. Then $G$ does not contain an expander of size $f(n)$ as a subgraph.  
 \end{theorem}
\begin{proof} 
Let $H$ be a subgraph of $G$ of size $f(n)$.  Let $C$ be the smallest sub-community in the hierarchical decomposition of $G$ containing at least  $(1-o(1))$-fraction of $H$. Note that if $H$ is an expander, then any subgraph of $H$ on $1-o(1) $ vertices is also an expander. 

First, since $|H| \in \omega(poly(\log n))$, $C$ cannot be a leaf. Therefore, $C$ it will be partitioned into sub-communities $C_1 \dots C_\ell$ by the hierarchical decomposition.  Since we assumed that $C$ is the smallest community containing $(1-o(1))$ fraction of $H$ and depth is $O(\log n)$,  each sub-community of $C$ can contain at most a constant fraction of $H$.  Then by claim~\ref{hcs-flat}, since $|C| \in \Omega(f(n))$ and so by assumption the number of inter-community edges in this decomposition is bounded by $o(f(n))$, $H$ is not an expander.     
 \end{proof}  
  
%Now, we want to argue that linear-size expanders cannot be embedded in graphs with "realistic"  good HCS. 

One of the main measures of ``quality'' of HCS is modularity value of the decompositions throughout the hierarchy: the higher values correspond to better structural properties. Let us start by relating modularity to the number of inter-community edges.  Note that while both the definition of expander and modularity are essentially relying on  estimating how far the number of edges across the "best" partition is from the expected value, the dependence on the size of the communities is somewhat different. 

First, to simplify our proof, consider decomposition into 2 communities; we can do it thanks to the following claim.  

\begin{claim}~\cite{newman2006modularity}
Suppose that $C_1 \dots C_t$ is a decomposition of $C$ optimizing modularity. Then for some $S=\{C_{i_1},\dots, C_{i_\ell}\}$ the partition into $S$ and $\overline{S}$ maximizes modularity among 2-partitions.     
\end{claim}

\newcommand{\eout}{\ensuremath{e_{out}}}
\begin{corollary} \label{2modularity}
For any graph G, the best modularity of a 2-partition is a lower bound for the modularity of an optimal partition, with edges in the the best 2-partition a subset of inter-community edges in the optimal partition. 
\end{corollary}

To simplify notation, for a set $S \subset V$, define $vol(S) = \Sigma_{v \in S} deg(v)$. Let $e_{out}(C)$ be the set of edges leaving a community $C_i$, so $\Sigma_{C \in P} e_{out}(C)$ is the set of all intercommunity edges in a partition $P$.    Now, we can restate the formula for modularity of a partition $P$  as follows, using the fact that $\Sigma_{C \in P} vol(C) = 2|E|$:  
\[ Q(P) = 1 - \frac{1}{2|E|} \Sigma_{C \in P} \eout(C)  - \frac{1}{4|E|^2}  \Sigma_{C \in P} (vol(C))^2   \] 

The optimal $Q$ of a graph $G$ is then $Q=Q(G) = \max_{P}  Q(P)$. Let $Q_2=Q_2(G) = \max_{P, |P|=2} Q(P)$ (that is, optimal modularity over 2-partitions).  Note that $Q(G) \geq Q_2(G)$ for any $G$. Let $P$ be a partition of $G$ into a set $S$ and its complement $\overline{S}$; we will always assume, without loss of generality, that $|S| \leq |\overline{S}|$. Let $|V|=n$ and $|E|=m$ (not to be confused with the number of clauses in the formula for which $G$ is a VIG).  Then  

\begin{align*}
Q(\{S,\overline{S}\}) & =  1 - \eout(S)/2m -\eout(\overline{S})/2m - (vol(S)/2m)^2 - (vol(\overline{S})/2|E|)^2 \\
& = 1 - \eout{S}/m - \frac{1}{4m^2} (vol(S)^2 + (2m-vol(S))^2) \\ 
& =   vol(S)/m -  vol(S)^2 / 2m^2 - \eout(S)/m \\
& = \frac{vol(S)}{m} (1 - vol(S)/2m)  - \eout(S)/m
\end{align*}

From there, we get that $\eout(S) =  vol(S)(1-vol(S)/2m) - Q(S)m$. Note that since $Q_2=max_S Q(\{S,\overline{S}\})$, $\eout(S) =  vol(S)(1-vol(S)/2m) - Q_2 m$.  Therefore, if $vol(S)(1-vol(S)/2m) - Q_2 m=o(f(n))$, then so is $\eout(S)$. 

%Recall that an edge expansion of a partition of a graph $G$ into set $S$ and its complement $\overline{S}$ is  $h(S) = \eout(S)/|S|$, and the edge expansion of a graph $G$ is $h(G) = min_S  h(S)$. 

%First, note that with high probability for every community $C'$ of the VIG of $F'$  with a subgraph $C_H$ on vertices of $G_H$, the decomposition of $C'$ will be essentially the same as decomposition of the corresponding community $C$ in $G_F$, as each of the vertices of $G_H$ 

%and replace a random subset of outgoing  edges of size $d$ for each of these vertices with edges of the VIG of $F_H$. Finally, create a new formula $F'$ 

%Note that this holds even if the community degree is constant and leaf size is $O(\log n)$ (so $F$ has "good HCS" other than the limit on the number of inter-community edges).  

\subsection{HCS Avoids the Resolution Limit in Some Graphs}
\label{sec:resolution_limit}

We prove Theorem~\ref{th:hcs-resolution}, restated below more formally.

{
\renewcommand{\thetheorem}{\ref{th:hcs-resolution}}
\begin{theorem}[restated]
  \label{th:hcs-resolution-formal}
  There exists a graph $G$ whose natural communities are of size
  $\log n$ and correspond to $\partition_h(G)$, while $\partition(G)$
  consists of communities of size $\sqrt{n/\log^3n}$.
\end{theorem}
\addtocounter{theorem}{-1}
}

We use as the separating graph $G$ the ring of cliques example
of~\cite{fortunato2007modularity}, which can be built as
follows. Start with a collection of $q=n/c$ cliques $C_1,\ldots,C_q$,
each of size $c=\log n$. Fix a canonical vertex $v_i$ for each clique
$C_i$. Add edges between $v_i$ and $v_{i+1}$, wrapping around at
$q$. The number of vertices is $n$ and the number of edges is
$m=\frac{n}{c}\binom{c}{2}+\frac{n}{c}=n(c+1)/2+o(n)$. The degree of
most vertices is $c-1$, except for the canonical vertices which have
degree $c+1$. The natural partition of $G$ into communities is the set
of cliques $C_1,\ldots,C_q$.

We say that a subgraph of $G$ preserves the cliques if for every
clique, it contains either all or none of its vertices. We say that a
partition of $G$ preserves the cliques if every element of the
partition preserves the cliques.

To prove that $\partition_h(G) = \set{C_1,\ldots,C_q}$ we need the
following two observations.

\begin{lemma}
  \label{lem:preserve-cliques}
  Let $H$ be a subgraph of $H$ that preserves the cliques. Then any
  partition of $H$ optimizing subgraph modularity preserves the
  cliques.
\end{lemma}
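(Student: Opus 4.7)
The plan is to argue by contradiction: suppose $\mathcal{P}^*$ is a partition of $H$ optimizing subgraph modularity, but some clique $C$ of $G$ (contained wholly in $H$ by the cliques-preserving assumption) is split across at least two communities, so that $C_j := C \cap S_j \neq \emptyset$ for indices $j = 1, \dots, r$ with $r \geq 2$. I would then exhibit an explicit modification $\mathcal{P}'$ of $\mathcal{P}^*$ with $Q(\mathcal{P}') > Q(\mathcal{P}^*)$, contradicting optimality.

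Recall the subgraph-modularity formula $Q(\mathcal{P}) = \tfrac{1}{m_H}\sum_i e(S_i) - \tfrac{1}{(2m_H)^2}\sum_i d(S_i)^2$, and the closed-form expression for $\Delta Q$ when a set $T$ is reassigned from one community to another. Take $\mathcal{P}'$ to be obtained from $\mathcal{P}^*$ by reassigning all vertices in $C \setminus C_t$ into a single target community $S_t$, chosen to minimize $d(S_t)$ among those $j$ with $C_j \neq \emptyset$. The edge-contribution gain from this move is $\binom{c}{2} - \sum_j \binom{|C_j|}{2} > 0$, from clique edges of $C$ newly falling inside a single community, corrected by at most $\bigoh{1}$ ring edges incident to the two canonical vertices of $C$. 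The structural facts driving the analysis are that vertices of $C$ have degree at most $c+1$ in $H$, and $m_H \geq \binom{c}{2}$ because $H$ preserves and thus fully contains $C$; consequently $d(C)/2m_H = \bigoh{1}$, so $C$'s contribution to the degree-penalty is small relative to the internal-edge gain, and the choice of $t$ minimizing $d(S_t)$ keeps the penalty increase from absorbing $C \setminus C_t$ favorable.

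The main obstacle I anticipate is the degree-penalty bookkeeping in the case where $S_t$ is itself large (so the one-shot merger might inflate $d(S_t)^2$ by more than the edge gain can compensate) or where the parts $C_j$ are nearly equal in size (so no single reassignment dominates). If the one-shot merger proves inadequate in full generality, the fallback plan is a sequence of single-vertex moves: at each step, pick a non-canonical vertex $u$ in $C_j$ for the index $j$ with $d(S_j)$ maximal among communities meeting $C$, and move it into the community meeting $C$ with $d(S_{j'})$ minimal. A direct calculation of $\Delta Q$ for this single-vertex move, using $d_u = c-1$, $e(u, S_{j'}) \geq |C_{j'}|$, and $e(u, S_j \setminus \{u\}) \leq |C_j| - 1$, yields $\Delta Q \geq \tfrac{1}{m_H} - \tfrac{(c-1)^2}{2m_H^2}$, which is strictly positive since $m_H \geq c(c-1)/2 > (c-1)^2/2$ whenever $c \geq 2$. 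Telescoping such moves until no clique is split yields a partition strictly better than $\mathcal{P}^*$, contradicting its optimality and completing the proof.
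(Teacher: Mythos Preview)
Your fallback single-vertex argument has a genuine gap. You select the source community $S_j$ by \emph{maximal total degree} $d(S_j)$ and the target $S_{j'}$ by \emph{minimal} $d(S_{j'})$, and then claim $\Delta Q \geq \tfrac{1}{m_H} - \tfrac{(c-1)^2}{2m_H^2}$ from $e(u,S_{j'}) \geq |C_{j'}|$ and $e(u,S_j\setminus\{u\}) \leq |C_j|-1$. But the edge term this yields is $|C_{j'}| - |C_j| + 1$, and nothing in your choice of $j,j'$ forces $|C_{j'}| \geq |C_j|$: the community degrees $d(S_j)$ depend on \emph{all} vertices of $S_j$, not on how many clique vertices $S_j$ contains. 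If, say, $|C_j|=5$ and $|C_{j'}|=1$, the edge term is $-3$ and the move \emph{decreases} modularity. Switching the selection rule to be based on $|C_j|$ instead fixes the edge term but then loses control of the degree-penalty term $d(S_{j'})-d(S_j)$, which can be of order $m_H$; the bound $\Delta Q \geq \tfrac{1}{m_H}-\tfrac{(c-1)^2}{2m_H^2}$ no longer follows. So neither selection rule makes the single-vertex calculation go through as written, and your Option~1 was already flagged as incomplete.

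The paper avoids this tension entirely with a different modification: rather than merging $C$ into an existing community, it \emph{isolates} $C$ as a brand-new community, replacing $V_1,\ldots,V_k$ by $V_1\setminus C,\ldots,V_k\setminus C,\,C$. Working in the pairwise form $Q=\tfrac{1}{2m}\sum_{u,v}(A_{uv}-d_ud_v/2m)\,\delta(c_u,c_v)$, the only negative contributions come from the at most two ring edges leaving $C$ (giving the $-4$), while every pair $u\in U_j,\,v\in U_{j'}$ with $j\neq j'$ contributes $1-d_ud_v/2m \geq 1/2$ on the positive side; summing these gives at least $c/4$. Crucially, because $C$ becomes its \emph{own} part, there is no large existing $d(S_t)$ to inflate the penalty, and removing the $U_j$ from each $V_j$ only lowers $d(V_j)^2$. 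This is exactly the obstacle you identified in Option~1, and the isolation trick is what dissolves it.
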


\begin{proof}
  Let $V_1,\ldots,V_k$ be a partition and assume that clique $C_i$
  contains vertices in at least two different sets $V_j$. We claim
  that the partition $V_1\setminus C_i,\ldots,V_k\setminus C_i,C_i$
  has greater modularity and that the number of split cliques
  decreases, therefore iterating this procedure until no clique is
  split concludes the lemma.

  To prove the claim, let $U_j = V_j \cap C_i$ and note that the change in modularity is at least
  \begin{align}
    2m\Delta Q
    &\geq - 4\left(1-\frac{(c+1)^2}{2m}\right) + \sum_{j \neq j'} \sum_{u\in U_j}\sum_{v\in U_{j'}}\left(1-\frac{d(u)d(v)}{2m}\right) \\
    &\geq -4 + \sum_{j \neq j'} \setsize{U_j}\setsize{U_{j'}}\left(1-\frac{(c+1)^2}{2m}\right) \\
    &\geq -4 + \frac{1}{2}\sum_{j \neq j'} \setsize{U_j}\setsize{U_{j'}} \\
    &\geq -4 + \frac{1}{4}\sum_{j \neq j',\setsize{U_j}>0,\setsize{U_j'}>0} \setsize{U_j}+\setsize{U_{j'}} \\
    &\geq -4 + c/4 > 0 \qedhere
  \end{align}
\end{proof}

\begin{lemma}
  \label{lem:separate-cliques}
  Let $H$ be a subgraph of $H$ that preserves the cliques. If $H$
  contains $q \geq 2$ cliques, then any partition of $H$ optimizing
  subgraph modularity contains at least $2$ elements.
\end{lemma}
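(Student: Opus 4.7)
The plan is to exhibit an explicit 2-element partition of $H$ whose subgraph modularity is strictly positive, and then invoke the observation that the trivial single-element partition has modularity zero; together these force the optimum to contain at least two elements. Concretely, for $\mathcal{P} = \{V(H)\}$ we have $Q = e_H/m_H - (d_H/2m_H)^2 = 1 - 1 = 0$, so any bipartition with $Q > 0$ beats it.

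Second, I would construct the bipartition. Because $H$ preserves the cliques, its vertex set is a union of cliques $C_{i_1},\ldots,C_{i_q}$ ordered by their positions on the ring. I would cut this ordered list into two contiguous arcs $A$ and $B$, each containing roughly $q/2$ cliques. Each arc has at most two endpoints on the ring, so the number of ring edges crossing the split satisfies $\text{cut}(A,B) \leq 2$. Using
\[
  Q(A,B) = 1 - \frac{\text{cut}(A,B)}{m_H} - \frac{d_A^2 + d_B^2}{(2m_H)^2},
\]
a balanced choice makes $d_A \approx d_B \approx m_H$ so that $(d_A^2+d_B^2)/(2m_H)^2$ is arbitrarily close to $1/2$. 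Since each clique contributes $\binom{c}{2}$ internal edges with $c=\log n$, the intra-clique edges $m_H \geq q\binom{c}{2}$ dwarf the at-most-two cut edges, yielding $Q(A,B) \geq 1/2 - O(1/m_H) > 0$.

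The main obstacle is the corner case $q=2$, where balancing is forced and $m_H$ is as small as $c(c-1)+1$. I would dispatch it by direct computation: placing each clique in its own community gives $e_A = e_B = \binom{c}{2}$, $d_A = d_B = m_H$, and $\text{cut}(A,B)\leq 1$ (zero if the two cliques are non-adjacent in the ring), so $Q = 1/2 - 1/(2\binom{c}{2}+1)$, which is positive for every $c \geq 2$, hence for $n \geq 4$. A subtlety worth mentioning is that I must verify the symmetry claim $d_A = d_B$ with care, since canonical vertices carry ring-edge degree; this is straightforward because canonical vertices of cliques in $H$ whose ring neighbors lie outside $H$ contribute equally to both sides under a contiguous-arc split only up to an additive $O(1)$, which is already absorbed in the $O(1/m_H)$ slack above. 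Combining the general and corner cases produces a partition with strictly positive subgraph modularity in all cases, so the optimum cannot be the trivial single-element partition.
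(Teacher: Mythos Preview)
Your proposal is correct and follows essentially the same approach as the paper: exhibit a bipartition of $H$ by splitting the $q$ cliques into two halves and show its subgraph modularity is strictly positive, hence beating the trivial one-block partition (which has modularity $0$). The paper's proof is terser---it writes the change $2m\Delta Q$ directly as a lost-edge term bounded by $-2$ plus the cross-degree sum $\sum_{u\in V_1}\sum_{v\in V_2} d(u)d(v)/(2m)$, and bounds the latter from below using $d(u)\geq c-1$---whereas you work with the equivalent formula $Q=1-\text{cut}/m_H-(d_A^2+d_B^2)/(2m_H)^2$ and are more explicit about the $q=2$ corner case and the $O(1)$ degree imbalance from canonical vertices, which the paper simply absorbs without comment.
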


\begin{proof}
  Let $V_1,V_2$ be the partition where the first half of the cliques
  in $H$ are contained in $V_1$ and the rest in $V_2$. The change in
  modularity with respect to the singleton partition is at least
  \begin{align}
    2m\Delta Q
    &\geq -2\left(1-\frac{(c+1)^2}{2m}\right) + \sum_{u\in V_1}\sum_{v\in V_2} \frac{d(u)d(v)}{2m} \\
    &\geq -2 + (cq/2)^2\frac{(c-1)^2}{2m}
    = -2 + \frac{n^2c^2}{9nc} > 0 \qedhere
  \end{align}
\end{proof}

We can now compute $\partition_h(G)$ by induction, using the induction
hypothesis that every node of the hierarchical tree is a subgraph that
preserves the cliques. The root is the whole graph $G$ and trivially
preserves the cliques. Every node of the hierarchical tree preserves
the cliques, then by Lemma~\ref{lem:preserve-cliques} all its children
preserve the cliques. This implies that all of the leaves preserve the
cliques. However, a leaf cannot consist of more than one clique,
otherwise it would contradict Lemma~\ref{lem:separate-cliques}. It
follows that all the leaves are single cliques as we wanted to show.

Next we prove that the partition optimizing modularity has large
elements, for which we need the following observations.

Let $V_1,\ldots,V_k$ be a partition that preserves the cliques. We
define the operation ``move clique $a$ to position $b$'' as
follows. Assume $a<b$. For $a \leq i < b$ we assign clique $i$ to the
set containing clique $i+1$, and clique $b$ to the set containing
clique $a$. Analogous if $a>b$.

\begin{lemma}
  \label{lem:contiguous}
  The modularity of $G$ is maximized by a partition of contiguous cliques.
\end{lemma}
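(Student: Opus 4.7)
The plan is to reduce the statement to a clean combinatorial counting problem by decomposing modularity into a piece that only depends on the sizes of the partition cells and a piece that only depends on how the cliques are arranged around the cycle.

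First I would invoke Lemma \ref{lem:preserve-cliques}: a modularity-maximizing partition preserves the cliques, so without loss of generality we may identify a partition $V_1,\ldots,V_k$ of $G$ with an assignment $\sigma:[q]\to[k]$ of cliques to cells. Writing modularity as
\begin{equation}
Q \;=\; \frac{E_{\mathrm{in}}}{m} \;-\; \sum_{j=1}^k \left(\frac{d(V_j)}{2m}\right)^2,
\end{equation}
I would then observe that every clique $C_i$ contributes the same total degree $c(c-1)+2$ (counting the two incident bridge edges for a canonical vertex, averaged into the clique as a whole), so $d(V_j)$ depends only on $\setsize{\sigma^{-1}(j)}$. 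Consequently the second term in $Q$ is a function of the size vector $(s_1,\ldots,s_k)$ alone. Likewise, $E_{\mathrm{in}}$ splits as $q\binom{c}{2} + B_\sigma$, where $q\binom{c}{2}$ is the fixed count of intra-clique edges and $B_\sigma$ is the number of bridge edges $(v_i,v_{i+1})$ (cyclically) with $\sigma(i)=\sigma(i+1)$.

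Fixing the size vector, maximizing $Q$ is therefore equivalent to maximizing $B_\sigma$. The second step is the combinatorial core: viewing $\sigma$ as a coloring of the cycle $[q]$, the number of cyclically consecutive pairs of different color equals the number $K(\sigma)$ of maximal monochromatic arcs, so $B_\sigma = q - K(\sigma)$. For any coloring that uses all $k$ cells, $K(\sigma)\geq k$ (with $K(\sigma)=1$ when $k=1$), with equality if and only if each color class is a single contiguous arc on the cycle. Hence for every size vector, $B_\sigma$ is maximized precisely by contiguous partitions, and so is $Q$. Taking the maximum over size vectors preserves this property: some maximizer is contiguous, and in fact every maximizer must be, because the inequality $K(\sigma)\geq k$ is strict for any non-contiguous $\sigma$.

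The step I expect to be the main obstacle is bookkeeping at the level of $d(V_j)$: one must check carefully that the degree contribution of each clique is a constant independent of which cell it lies in, using that the two bridge edges incident to $v_i$ always add a fixed $+2$ to $d(V_{\sigma(i)})$ whether or not $\sigma(i-1)=\sigma(i)$ or $\sigma(i)=\sigma(i+1)$. Once that is pinned down, the expected-edge term genuinely depends only on sizes, and the rest of the argument is a one-line cycle-coloring count. Note that this argument cleanly establishes the lemma without invoking the ``move clique'' operation defined just before the statement; that operation would instead yield a monotone local-improvement proof, which I would keep in reserve as an alternative route if the size-vector reduction hit an unexpected snag.
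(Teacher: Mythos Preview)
Your proposal is correct and takes a genuinely different route from the paper. The paper's proof is a two-line local-improvement argument using exactly the ``move clique $a$ to position $b$'' operation you mention as a backup: starting from a clique-preserving partition (via Lemma~\ref{lem:preserve-cliques}), whenever a block is non-contiguous one moves two of its intervals adjacent to each other, and this strictly increases modularity; iterate until every block is an interval. Your approach instead globally decomposes $Q$ into a size-only term (because every clique contributes the same total degree $c(c-1)+2$, so $d(V_j)$ depends only on $s_j$) and a bridge-edge term $B_\sigma$, and then maximizes $B_\sigma = q - K(\sigma)$ over colorings of the cycle with a fixed size vector, which is a clean counting argument.

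What each buys: the paper's route is shorter but leaves the reader to verify that the move operation preserves the size vector and hence the degree term, and that the net effect on $E_{\mathrm{in}}$ is positive---points your decomposition makes explicit. Your route in fact proves the stronger statement that contiguous partitions are optimal \emph{within every size class}, which dovetails directly with the next step (Lemma~\ref{lem:homogeneous}) where one then optimizes over sizes. One small wrinkle: your identity $B_\sigma = q - K(\sigma)$ with $K(\sigma)$ defined as the number of maximal monochromatic arcs is off by one when $k=1$ (then $B_\sigma=q$ but $K(\sigma)=1$); this is harmless here since Lemma~\ref{lem:separate-cliques} guarantees $k\geq 2$ at the optimum, but it is worth stating the identity for $k\geq 2$ only, or equivalently defining $K(\sigma)$ as the number of bichromatic bridge edges.
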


\begin{proof}
  By Lemma~\ref{lem:preserve-cliques} the partition consists of unions
  of cliques. Assume a set contains non-contiguous cliques. Moving two
  intervals of cliques next to each other increases the modularity,
  therefore we can keep repeating this procedure until the sets only
  consist of contiguous intervals.
\end{proof}

In what follows we assume that $n$ is large enough for it not to make
it a difference when we treat variables as being continuous when they
are in fact discrete.

\begin{lemma}
  \label{lem:homogeneous}
  The modularity of $G$ is maximized by a partition of equal-sized
  elements.
\end{lemma}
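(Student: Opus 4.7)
The plan is to leverage Lemma~\ref{lem:preserve-cliques} and Lemma~\ref{lem:contiguous} to reduce the problem to optimizing over partitions $V_1,\ldots,V_k$ where $V_j$ consists of a contiguous block of $s_j$ cliques on the ring, with $\sum_j s_j = q$. After this reduction, the modularity becomes an explicit function of the size sequence $(s_1,\ldots,s_k)$, and I will show that the only dependence on the individual $s_j$'s enters through $\sum_j s_j^2$ with a negative coefficient, so minimizing this sum (subject to the fixed sum $q$) forces the sizes to be equal.

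Concretely, first I would compute the two ingredients of $Q=\sum_j e_j/m - \sum_j(d_j/2m)^2$ for a block of $s_j$ consecutive cliques. The number of internal edges is $e_j = s_j\binom{c}{2} + (s_j-1)$, since we pick up all clique edges plus one ring edge for each pair of consecutive cliques inside the block. The sum of degrees inside $V_j$ is $d_j = s_j(c^2-c+2)$, since each clique contributes $c-1$ non-canonical vertices of degree $c-1$ plus one canonical vertex of degree $c+1$. Summing over $j$:
\begin{align}
    \sum_j e_j &= q\binom{c}{2} + (q-k), \\
    \sum_j d_j^2 &= (c^2-c+2)^2 \sum_j s_j^2.
\end{align}
Hence
\begin{equation}
    Q = \frac{q\binom{c}{2} + (q-k)}{m} - \frac{(c^2-c+2)^2}{(2m)^2}\sum_j s_j^2,
\end{equation}
so for fixed $k$ the first term is a constant and $Q$ is maximized precisely when $\sum_j s_j^2$ is minimized under the constraint $\sum_j s_j = q$.

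The final step is the standard fact that $\sum_j s_j^2$ subject to $\sum_j s_j=q$ is minimized at $s_j = q/k$ (e.g.\ by Cauchy–Schwarz, or by noting that replacing a pair $(s_j,s_{j'})$ with $s_j>s_{j'}+1$ by $(s_j-1,s_{j'}+1)$ strictly decreases the sum of squares). This yields equal-sized elements when $k\mid q$, and otherwise produces sizes differing by at most one, which is what ``equal-sized'' means once the discreteness caveat the authors acknowledge is made precise.

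The main obstacle is really bookkeeping rather than a conceptual hurdle: one must be careful that the ring edges crossing between blocks are accounted for (the $q-k$ term), that the asymmetric degree of canonical vertices is correctly propagated into $d_j$, and that the wrap-around of the ring does not spoil the ``contiguous block'' reduction (here one can fix an arbitrary starting clique once Lemma~\ref{lem:contiguous} is in place). A secondary subtlety is that the above argument holds for each fixed $k$; to promote this into a statement about the global optimum one observes that any non-equitable partition can be strictly improved without changing $k$, so the global maximizer is necessarily equitable.
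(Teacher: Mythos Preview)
Your argument is correct and is essentially a more explicit, global version of what the paper does. The paper's proof is a bare local-improvement argument: starting from a partition into contiguous intervals (via Lemma~\ref{lem:contiguous}), it takes a smaller block $V_a$ and a larger block $V_b$, peels an endpoint clique $C_i$ off $V_b$, reassigns it to $V_a$, and uses the ``move clique'' operation to slide it next to an endpoint of $V_a$, claiming (without computation) that modularity strictly increases; iterating balances the blocks. Your route instead writes down the closed form $Q = \tfrac{1}{m}\bigl(q\binom{c}{2}+q-k\bigr) - \tfrac{(c^2-c+2)^2}{4m^2}\sum_j s_j^2$ and then invokes convexity of $\sum_j s_j^2$ under the sum constraint.

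The two are really the same engine viewed from opposite ends: the paper's swap is exactly the pairwise exchange $(s_b,s_a)\mapsto(s_b-1,s_a+1)$ you mention, and the reason it increases $Q$ is precisely that $\sum_j e_j$ depends only on $k$ while $\sum_j d_j^2$ drops---which is what your formula makes visible. What your version buys is that the ``increases modularity'' step is no longer an assertion but a one-line consequence of the displayed formula, and the bookkeeping about ring edges and canonical-vertex degrees is done once and for all (and in fact reused in the next lemma's computation). What the paper's version buys is brevity: it avoids carrying the constants through and just points at the swap, at the cost of leaving the reader to check that the edge term is unaffected. Your handling of the discreteness caveat (sizes differ by at most one when $k\nmid q$) is also more careful than the paper, which explicitly flags that it is treating discrete quantities as continuous.
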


\begin{proof}
  By Lemma~\ref{lem:contiguous} the partition consists of intervals of
  cliques. Assume interval $V_a$ is smaller than interval $V_b$. Let
  $C_i$ be an endpoint of $V_b$. Then assigning clique $C_i$ to $V_a$
  and moving it next to an endpoint of $V_a$ increases the modularity,
  therefore we can repeat this procedure until the sets are balanced.
\end{proof}

\begin{lemma}
\label{lem:largePartitions}
  The modularity of $G$ is maximized by a partition with elements of
  size $\sqrt{n/\log^3 n}$.
\end{lemma}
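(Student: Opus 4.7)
By Lemmas~\ref{lem:preserve-cliques}, \ref{lem:contiguous}, and \ref{lem:homogeneous}, an optimal partition consists of $k$ equal-sized contiguous intervals of the ring, each interval being a union of exactly $s := q/k$ cliques. Hence $Q$ is a function of the single (essentially continuous, per the disclaimer before Lemma~\ref{lem:homogeneous}) parameter $k$, and the plan is to write $Q(k)$ in closed form and optimize in $k$.

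The key observation for a clean formula is that by symmetry the $k$ cuts of the ring remove exactly $k$ bridge edges, so $\sum_j e_{jj} = m - k$, and each of the $k$ equal blocks has degree-sum $d_j = 2m/k$, so $\sum_j (d_j/2m)^2 = 1/k$. Hence
\begin{equation*}
Q(k) \;=\; \left(1 - \frac{k}{m}\right) \;-\; \frac{1}{k}.
\end{equation*}
Treating $k$ as continuous, $Q'(k) = -1/m + 1/k^{2} = 0$ gives $k^{*} = \sqrt{m}$, and the second derivative $-2/k^{3}$ is strictly negative, so $k^{*}$ is the unique maximizer. Recalling $m = q\binom{c}{2} + q = \Theta(nc)$ (with $n = qc$ and $c = \log n$), this yields $k^{*} = \Theta(\sqrt{nc}) = \Theta(\sqrt{n\log n})$ and
\begin{equation*}
s^{*} \;=\; \frac{q}{k^{*}} \;=\; \Theta\!\left(\frac{n/c}{\sqrt{nc}}\right) \;=\; \Theta\!\left(\sqrt{n/c^{3}}\right) \;=\; \Theta\!\left(\sqrt{n/\log^{3} n}\right),
\end{equation*}
as claimed.

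The one genuinely delicate step is reconciling the continuous optimum with the integrality constraint $k \mid q$. Because $Q(k)$ is strongly concave in a neighborhood of $\sqrt{m}$ (the second derivative has magnitude $\Theta(1/m^{3/2})$ there) and because divisors of $q = n/\log n$ are sufficiently dense on the relevant scale, one can show that the best \emph{integer} divisor $k$ of $q$ still lies within a constant factor of $\sqrt{m}$, which preserves the $\Theta(\sqrt{n/\log^{3}n})$ asymptotic for $s^{*}$. The main obstacle will be making this density/rounding argument rigorous without muddying the otherwise clean optimization; a safe route is to relax Lemma~\ref{lem:homogeneous} to ``equal up to $\pm 1$ cliques'' so that arbitrary integer $k$ are feasible and the rounding cost is additive $O(1/k^{*2})$, which is dominated by the $\Theta(1/k^{*})$ leading term of $Q$. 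Everything else is bookkeeping of lower-order corrections (the $+1$ in $m$, the $c+1$ degree of the canonical vertices, the $s-1$ internal bridges), all of which contribute only to constants in the $\Theta(\cdot)$.
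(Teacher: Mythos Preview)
Your proof is correct and follows the same strategy as the paper: reduce via the previous lemmas to a one-parameter family of partitions, write $Q$ as a function of that parameter, and optimize by calculus. Your execution is in fact cleaner than the paper's---by exploiting that the $k$ blocks are identical you get $Q(k)=1-k/m-1/k$ directly, whereas the paper sums over vertex-type pairs to reach an equivalent but messier expression with the same leading asymptotics---and you also address the integrality issue the paper explicitly flags as sloppy and leaves open.
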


\begin{proof}
  By Lemma~\ref{lem:homogeneous} the partition consists of equal-sized
  intervals of cliques. Let $k$ be the size of the partition and let
  $q$ be the number of cliques in each block. The modularity is
  \begin{align}
    2mQ
    &= k \biggl(qc(c-1)+2(q-1) - \frac{1}{2m}\Bigl(q^2(c+1)^2+{}\\
    &\qquad{}+ q(q(c-1))(c+1)(c-1) + (q(c-1))^2(c-1)^2 \Bigr)\biggr) \\
    &= (n/cq) \left( q(c^2-c+2) - 2 - \frac{q^2}{2m}\left((c+1)^2+(c-1)^2(c+1)+(c-1)^4\right) \right) \\
    &= (n/c)(c^2-c+2) -2n/cq - \frac{nq}{2mc}\left((c+1)^2+(c-1)^2(c+1)+(c-1)^4\right)
  \end{align}
  which is a function of the form $a_0 - a_1q^{-1} - a_2q$
  and is maximized when its derivative is $0$ at point
  \begin{equation}
    q=\sqrt{\frac{a_1}{a_2}} \geq \sqrt{\frac{2n/c}{nc^3/2m}} \geq \sqrt{\frac{2n}{c^3}} \qedhere
  \end{equation}
\end{proof}

This completes the proof of Theorem~\ref{th:hcs-resolution-formal}.

\subsection{Justification for Parameter Choices and Lower Bounds for HCS}\label{sec:counterexamples}

%{\color{red} note to self: take a 3CNF, pad with a size $n$ clause and claim that restricting a single variable makes the formula still hard. However, the VIG remains hard to refute. Can we use this to rule out maximal depth? Ie just show there's a level 1 decomposition}

For a formula to have ``good" HCS, we require that a number of parameters fall within appropriate ranges. One could hope that a single parameters of HCS might be sufficient in order to guarantee tractability, while also capturing a sufficiently large set of interesting instances. A natural candidate parameter would be \emph{high smooth modularity}: The modularity of the VIG is large, and the decrease in modularity from the parent to a non-leaf child in the HCS is sufficiently bounded. This captures the recursive intuition of HCS: each community should either be a leaf or should have a good partition into communities; this is closely related to the average modularity at each level, which is a parameter used in our experiments. However, in this section we show that if we are after provable tractability, it is unlikely that a single parameter of HCS will suffice. In doing so, we motivate using an ensemble of parameters as we have done in our experiments, as well as justify why we have chosen many of the parameters that we have.

Formally, we show that combinations of the following ranges of parameters admit formulas which are exponentially hard to refute in resolution.
\begin{itemize}
    \item \emph{High Root Modularity}: The modularity of $G$ is sufficiently large.
    \item \emph{High Smooth Modularity}: The modularity of $G$ is sufficiently large and the modularity of every non-leaf node is bounded below by a function of its level. 
    \item \emph{Bounded Leaf Size}: Each of the leaf-communities of the optimal HCS decomposition is sufficiently small --- in particular, of size $O(\log n)$.%; we will justify this parameter in the subsection below.
    \item \emph{High Depth}: There is a sufficiently deep path in the optimal HCS decomposition.
    \item \emph{Minimum Depth}: There is a lower bound on the depth of every path in the optimal HCS decomposition.
    %\item \emph{Maximum Depth}: There is an upper bound on the depth of the optimal HCS decomposition.
    \item \emph{Dense Inter-Community Edges}: The optimal HCS decomposition has $\Omega(n^\varepsilon)$ inter-community edges between for all communities of size $O(n^\varepsilon)$
\end{itemize}
One finding that we would like to highlight is that in order to enure that the formula is tractable, the leaf-communities of the optimal HCS decomposition cannot be both large (of size $\omega(\log n)$) and unstructured; indeed, HCS says nothing about the structure of the leaf-communities. 

Finally, we note that we can still construct hard formulas which have ``good'' HCS parameters. However, the construction of these formulas is highly contrived and non-trivial and we do not see a way to simplify them. Indeed, they are far more contrived than the counterexamples to modularity given by \cite{mull-sat16}. We take this as empirical evidence that instances with good HCS avoid far more hard examples than formulas with high modularity.

Throughout, it will be convenient to make use of the highly sparse VIGs provided by random CNF formulas. For positive integer
parameters $m,n,k$, let $\mathcal{F}(m,n,k)$ be the uniform
distribution on formulas obtained by picking $m$ $k$-clauses on $n$
variables uniformly at random with
replacement. %It is well known that for any $k$, the satisfiability of $\mathcal{F}(m,n,k)$ exhibits a threshold behaviour: there exists $\Delta_k = m/n$ such that
It is well known that for any $k$, the satisfiability of this
$F \sim \mathcal{F}(m,n,k)$ is controlled by the clause density
$\Delta_k:= m/n$: there is a threshold of $\Delta_k$ after which
$F \sim \mathcal{F}(m,n,k)$ becomes unsatisfiable with high
probability. Furthermore,
random $k$-CNF formulas near this threshold ($\Delta_k = O(2^k)$
suffices) require resolution refutations of size $2^{n^\varepsilon}$
for some constant $\varepsilon >0$ with high probability~\cite{chvatalS88,beameP96}.
The VIG of such a formula is sparse (it has $\Theta(n)$ edges) and with high probability
the maximum degree is $O(\log n)$. Furthermore, with high probability $F$ is expanding, and therefore the edges are distributed roughly uniformly throughout the VIG.

For brevity, our arguments section will be somewhat informal; however, it should be clear how to formalize them.

\paragraph{ \bf Root Modularity.} Mull, Fremont, and
Seshia \cite{mull-sat16} proved that having a highly modular VIG does not suffice to guarantee short resolution refutations. To do so, they extended the lower bounds on the size of resolution refutations of random $k$-CNF formulas \cite{chvatalS88,beameP96} to work for a distribution of formulas
whose VIGs have high modularity, thus showing the existence of a large
family of hard formulas with this property.
    
A much simpler proof of their result\,---\,albeit with slightly worse
parameters\,---\,can be obtained as follows: Let
$F \sim \mathcal{F}(m,n,k)$ with $k,m$ set appropriately so that $F$
is hard to refute in resolution with high probability, and let $F'$ be
the formula obtained by taking $t$ copies of $F$ on disjoint sets of
variables. It can be checked that the modularity of the partition of the VIG of $F'$ which has $t$ communities, one corresponding to each of the copies of $F$ has modularity $1-o(1/t)$.
%It can be checked that the modularity of the VIG corresponding to $F'$ is maximized by a partition with $t$ communities, each corresponding to a copy of $F$. The modularity of this community structure is $1-o(1/t)$.
%\textcolor{red}{MV: the best partition is clearly a refinement of the partition we described (there are no inter-edges) but I am not 100\% sure it won't break a random graph in two, particularly if $t$ is small. Anyway the result is true, so this is low priority.}
Setting $t$ sufficiently large, we obtain a formula whose VIG has high modularity. As each copy of $F$ is on distinct variables,
refuting $F'$ is at least as hard as refuting $F$. Thus, a lower bound
of $2^{\Omega(n^\varepsilon)}$ for some constant $\varepsilon>0$
follows from the known lower bounds on refuting $F$ in resolution. If
we let $v =nt$ be the number of variables of $F'$ then this lower
bound is of the form $2^{\Omega((v/t)^\varepsilon)}$ and is 
superpolynomial provided $t =o(n/\log n)$. This argument also applies
to hierarchical community structure.

Observe that each leaf-community of $F'$ is a formula $F$ on $v/t$
variables which is hard to refute in resolution. This shows that if we
want to ensure polynomial-size resolution proofs, then we cannot allow the leaf
communities of the HCS decomposition to be both unstructured and large
(of size $\omega(\log n)$). The simplest way to avoid this is to
require the size of the leaf-community to be bounded by $O(\log n)$. However, in a later paragraphs we will show that this restriction is not sufficient on its own.

\paragraph{\bf Root Modularity and Maximum Depth} The previous example can be modified in order to rule out requiring an \emph{upper bound on the depth} and  \emph{high smooth modularity}. 
We will use the simple observation the at the optimal community structure decomposition of $t$ disjoint cliques on $n$ vertices has $t$ communities, one for each clique. Furthermore, HCS will not decompose any of these cliques, and therefore the optimal HCS has a single level and maximum depth assumption. 
 
Let $F'$ be the formula constructed in the previous example. We will modify each copy of $F$ so that its VIG is a clique $K_n$. Let $i \in [t]$, and for the $i$th copy of $F$ do the following: pick a variable $\hat{x}_i \neq v_i^*$ and note that there is a direction $\alpha_i \in \{0,1\}$ in which it can be set such that the complexity of refuting $F\restriction( \hat{x} = \alpha_i)$ in resolution is at most half the complexity of refuting $F$. Add an arbitrary clause of length $n$ containing every variable in $F$ such that $\hat{x}$ occurs positively if $\alpha_i=1$ and negatively otherwise. Observe that the VIG of $F$ is now a clique. 

After this process, the VIG of $F'$ has $t$ disjoint cliques. It remains to see that $F'$ is still hard to refute in resolution. This follows because applying the restriction which sets $\hat{x}_i = \alpha_i$ for all $i \in [t]$ leaves us with $t$ disjoint copies of $F'$ which require $2^{\Omega(n^{\varepsilon})}$ size resolution refutations, and because resolution complexity is closed under restriction.
    
\paragraph{\bf Smooth and Bounded Leaf Size.} Next, we show that
requiring the optimal HCS decomposition to have \emph{high smooth modularity} and to have 
\emph{bounded leaf size} is not sufficient in order to ensure small
resolution refutations.
Let $F \sim \mathcal{F}(m,n,3)$ be a random
$3$-CNF formula with $m$ set so that $F$ is hard to
refute in resolution whp. Let $G$ be the VIG of $F$, which has $\Theta(n)$ edges and degree $O(\log n)$ whp. Let $p = O(\log n)$ and $K_p$ be a clique
on $p$ vertices. Let $2 \leq t \leq p$ be any integer and let $F_K$ be
any $t$-CNF formula on $p$ variables such that every pair of variables occurs in some clause; this requires at most $p^2$ clauses.  Observe that the VIG
%{\color{blue} AK: Is it every variable or every pair of variables? NF: Good catch!} 
of $F_K$ is $K_p$. Furthermore, observe that $F_K$ is satisfied by any truth
assignment that sets at least $p-t+1$ variables to true.
    
Using $K_p$ and $G$ we construct a family of formulas that is hard to
refute in Resolution and whose VIG has the desired properties. Let $G'$ be the rooted product of $G$ and $K_p$,
%\textcolor{red}{MV: this is the rooted product, right?}
that is let $v^*$ be an arbitrary vertex of $K_p$, create $n$ copies of
$K_p$, and identify the $i$th vertex of $G$ with
the vertex $v^*$ of the $i$th copy of $K_p$.

Next, we show that because $G$ is sparse and each $K_p$ is a clique, $G'$ has modularity which tends to $1$ with $n$. 

\begin{lemma}
  \label{lem:sparse-clique-modularity}
  Let $G$ be a graph of order $n$, size $m=\Theta(n)$, and degree
  $O(\sqrt{n})$; let $K_p$ be a clique on $p$ vertices with $p = \omega(1)$ and $p = o(n)$;
  and let $G'$ be the rooted product of $G$ and $K_p$. Then
  $Q(G') = 1 - o(1)$.
\end{lemma}

\begin{proof}
  Let $m' \leq m + np^2/2$ be the number of edges of $G'$.
  Consider the partition $P$ given by the $n$ copies of $K_p$. We have
  \begin{equation}
    Q(P) \geq \frac{np^2}{2m'}\left[1-\frac{(d+p)^2}{2m'}\right] \geq \left[1-\frac{m}{m'}\right]\left[1-\frac{(d+p)^2}{2m'}\right]
  \end{equation}
  and $m/m'=o(1)$ because $p = \omega(1)$, while $(d+p)^2/2m' = o(1)$ because
  $d=O(\sqrt{n})$ and $p=o(n)$. Hence $Q(G') \geq Q(P) = 1-o(1)$.
\end{proof}

As well, by a similar argument to the proof of Lemma~\ref{lem:preserve-cliques}, the optimal HCS decomposition will preserve cliques at each level (i.e. none of the cliques $K_p$ will be decomposed during the HCS decomposition. Furthermore, by an argument similar to Lemma~\ref{lem:largePartitions}, each node in the HCS decomposition will have too many children. Finally, observe that because the variables of the random CNF formula $F$ are distributed uniformly, the non-clique edges in $G$ will be distributed approximately uniformly between the cliques. Taken together, this implies that the modularity of every non-leaf node in the HCS cannot decrease too much compared to the modularity of its parent. In particular, we satisfy the high smooth modularity condition.

It remains to show that we can construct a formula with the same VIG which is hard to refute in resolution. Let $F'$ be the formula obtained in the same way: begin with the
formula $F$. Make $n$ copies of $F_K$ and choose some vertex $v_i^*$ from the $i$th copy of $F_k$. Identify $v_i^*$ with $x_i$, the $i$th variable of $F$. Observe that the VIG of $F'$ is exactly $G'$. It remains to argue that $F'$ is hard to refute in
Resolution. Let $V_i$ be set of variables on which the $i$th copy of
$F_K$ depends. Let $\rho \in \{0,1,*\}^{pn}$ be the restriction which
sets all variables in $V_i \setminus v^*_i$ to true for all
$i \in [n]$ and does not set each $v^*_i$ (that is $\rho(v^*_i) = *$
for all $v^*_i$). We claim that $F' \restriction \rho = F$. Indeed,
each clause in each $F_K$ depends on at least two variables, of which
at least one is set to true because $t \geq 2$. The only variables on
which $F$ depends are $v_i^*$ for $i \in [n]$, which have been left
untouched; thus $F \restriction \rho = F$. The lower bound on $F'$ follows from the $2^{\Omega(n^{\varepsilon})}$ lower bound on $F$ together with the fact that the complexity of resolution proofs is closed
under restriction.
    
    %\paragraph{\bf Smooth, Bounded Leaf Size, Bounded Degree.} Observe that the same argument as in the previous example also works to show that if, in addition to the previous parameters, we require that degree of the HCS  (i.e. the number of children of each node in the tree) is bounded by a value $t$, then this is still insufficient to ensure that short resolution proofs exist. Indeed, by the same argument as in the previous example and Lemma~\ref{lem:preserve-cliques}, the optimal HCS will preserve the cliques of $F'$. That is, it will decompose $F'$ into the disjoint copies of $F_K$ before decomposing any of the cliques. 

\paragraph{\bf Deep, Smooth, Bounded Leaf Size.} We extend
the previous example to show that if, in addition to the previous
restrictions, we require the HCS decomposition have \emph{high
  depth}, then this is still insufficient to guarantee short resolution proofs.
    %First, we observe that it is straightforward to rule out the \emph{high depth} requirement that the HCS tree has \emph{at least one} long root-to-leaf path from being sufficient. 
  To see this, let $F'$ be the formula constructed in the
previous example, and let $F''$ be a formula containing only positive
literals and whose VIG satisfies smooth HCS, bounded leaf-community
size, and whose optimal HCS decomposition has a sufficiently deep
root-to-leaf path. Note that such a formula $F''$ exists if and
only if there exists a formula satisfying these properties with no
restriction on the polarity of the literals. Therefore, we can
assume the existence of $F''$, as otherwise this set of parameters
does not capture any instances and is therefore not useful. 

We claim that the VIG corresponding to $F' \wedge F''$ will satisfy their shared properties. First, observe that $F' \wedge F''$ has high modularity:
the modularity of both $F'$ and $F''$ is high by assumption and therefore the partition which is the union of the maximum modularity partitions of $F'$ and $F''$ will have high modularity. Furthermore, because $F'$ and $F''$ are disjoint, the HCS decomposition will never never put (a part of) both of these formulas in the same community. Thus, the maximum modularity partition is the union of the maximum partitions of $F'$ and $F''$. From then onwards, because $F'$ and $F''$ are smooth, the HCS decomposition of $F' \wedge F''$ will be smooth. As well, because $F''$ has a deep decomposition, this implies that the decomposition of $F' \wedge F''$ will also contain a deep path. 
Finally, because $F''$
is satisfiable and $F'$ requires exponential length refutations in
resolution, $F' \wedge F''$ does as well. 

\paragraph{\bf Lower Bounds when the Number of Inter-community Edges is Large.} Finally we show that, under the assumption that each large community has a non-negligible fraction of inter-community edges, we can construct a hard formula satisfying most ``good'' HCS parameters. We note that the requirement on the inter-community edges is fairly weak and we believe that it can be weakened even more.

Consider a formula $F$ with a VIG $G$ such that its hierarchical decomposition has $\Omega(n^\varepsilon)$ inter-community edges for all communities of size $O(n^\varepsilon)$; suppose also that the community degree of $F$ is a constant $c$.  We show that it is then possible to embed a graph of a sparse hard formula  of size $O(n^\varepsilon)$ into the graph of $F$, and construct a formula with the same VIG so that refuting this new formula amounts to refuting the sparse hard formula.  

Let $F_H$ be a (family of) hard formulas on $n'=n^\varepsilon$ variables with a constant clause width $k$ such as each variable occurs in constantly many clauses $\Delta$; for example, $F_H$ could be a Tseitin formula on a 3-regular expander. We choose $F_H$ to be such that VIG of $F_H$ is a $d$-regular expander, where $d=\Delta\cdot w$; this is satisfied by formulas resulting from the Tseitin contradictions on expanders.  

Let $F$  be a (family of) easy formulas with a VIG $G_F$ with ``good'' HCS parameters except that its hierarchical decomposition has $\Omega(n^\varepsilon) \geq b n^\varepsilon$ inter-community edges for all communities of size $O(n^\varepsilon)$, and such that the degree of vertices in its VIG is at least $D$ for some large enough $D$.   Note that the number of communities of this size is at least $2n^{1-\varepsilon}$, so the average number of inter-community edges out of each vertex of $G_F$ is at least $2b = 2 n^{1-\varepsilon} \cdot b n^\varepsilon / n$. To simplify calculations, take $D \geq 4d + 4b$ (note that $D$ can still be a constant, just a larger constant than $d$).

Now, add $n^\varepsilon$ new vertices to $G$, placing each of them in a separate leaf of the hierarchical decomposition of $G_F$. Associate each of these vertices with a variable of $F_H$, and add edges corresponding to clauses of $F_H$; call the resulting subgraph $G_H$. Now, for each leaf containing a new variable $v^*_i$, add $v^*_i$ to enough clauses within this leaf to bring the degree of $v^*_i$ up to $D$ (here, assume that these leaves are of size at least $D$).   Finally, create a new formula $F'=F^+ \wedge F_H$, where $F^+$ is $F$ with all negations removed (that is, every literal of $F^+$ is positive and so $F^+$ is trivially satisfiable). Therefore, to refute $F'$ it is necessary to refute $F_H$. If $F_H$ is a Tseitin contradiction on a 3-regular expander, then the size of its resolution proof is $2^{\Omega(n')}=2^{\Omega(n^\varepsilon)}$ \cite{ben2001short,urquhart1987hard}. 

\begin{claim}\label{hidden-hard} 
 HCS properties of the VIG of $F'$, $G_{F'}$, are indistinguishable from those of the VIG of $F$.
\end{claim}
\begin{proof}[Proof outline] 
First, consider the root of the hierarchical decomposition, i.e., the first decomposition of $G_{F'}$. We can show that the optimal partition of $G_{F'}$ is very close to the optimal partition of $G_F$ because new vertices form a very small part of every community, and every vertex of $G_H$  has many more edges to the vertices outside of $G_H$ in $G_{F'}$ than to vertices within $G_H$. Let community degree be $c$ and suppose the expansion of $G_H$ is $\alpha$; then, in the worst case, the decomposition will contribute $c(c-1)\alpha n^\varepsilon / 2c = \frac{c-1}{2} \alpha n^\varepsilon$ edges. But as we assumed that the number of edges is already $\Omega(n^\varepsilon)$, this will add at most a constant factor to the total number of inter-community edges at the root level. Now we can repeat this argument for the subcommunities, which, since vertices were selected randomly, with high probability will contain a number of variables of $G_H$ proportional to their size, and have a diminishing fraction of edges of $G_H$.  
\end{proof}

}
\end{document}

%%% Local Variables:
%%% mode: latex
%%% TeX-master: t
%%% End: